\documentclass[a4paper,fleqn]{cas-sc}

\usepackage[sort&compress,numbers]{natbib}
\usepackage{float}
\usepackage{graphicx}
\usepackage{tabularx}
\usepackage{subcaption}
\usepackage{caption}
\def\tsc#1{\csdef{#1}{\textsc{\lowercase{#1}}\xspace}}
\tsc{WGM}
\tsc{QE}
\tsc{EP}
\tsc{PMS}
\tsc{BEC}
\tsc{DE}
\newtheorem{theorem}{Theorem}

\newdefinition{rmk}{Remark}
\newdefinition{assump}{Assumption}
\newdefinition{pf}{Proof}

\begin{document}
\let\WriteBookmarks\relax
\def\floatpagepagefraction{1}
\def\textpagefraction{.001}
\shorttitle{}
\shortauthors{J. Li et~al.}
\let\printorcid\relax

\title [mode = title]{On Delay-robustness of Extremum Seeking of Nonlinear Static Maps with Small Disturbance}                      
\tnotemark[1]

\tnotetext[1]{This work is supported by National Natural Science Foundation of China (Grant No. 62303410, Science Fund Program for Distinguished Young Scholars (Overseas)), Zhejiang Provincial Natural Science Foundation of China (Grant No. LQ23F030014).}


\author[1,2]{Jianzhong\ Li}[
]
\ead{jianzhong_li626@163.com}

\credit{Investigation, Methodology, Writing – original draft, Validation}

\affiliation[1]{organization={School of Information and Control Engineering, Southwest University of Science and Technology},
                city={Mianyang},
                postcode={621010}, 
                country={China}}

\author[2]{Yang\ Zhu}[
]
\cormark[1]
\ead{zhuyang88@zju.edu.cn}
\credit{Methodology, Writing – review and editing, Supervision, Funding acquisition}

\author[2]{Hongye\ Su}[%
   ]
\ead{hysu69@zju.edu.cn}
\credit{Methodology, Supervision, Funding acquisition}

\affiliation[2]{organization={College of Control Science and Engineering, Zhejiang University},
	city={Hangzhou},
	postcode={310027}, 
	country={China}}



\cortext[cor1]{Corresponding author: {Yang\ Zhu}.}


\begin{abstract}
Extremum seeking (ES) is a real-time optimization
strategy, thus transmission delays in the feedback loop of ES
have big impact on its stability. How big delay that ES control
systems are able to withstand? This paper provides a
potential answer to this problem. We focus on gradient-based ES for nonlinear static maps subject to known constant delays plus a small time-varying delay uncertainty. {We also consider the measurement to be subject to a small disturbance.} Different from a majority of existing literature addressing quadratic maps with delays by predictor feedback, this paper deals with a wider class of non-quadratic maps without any predictor or observer for delay compensation. Dither signals in modulation and demodulation are carefully designed to handle constant delays and time-varying delay uncertainties. When the nonlinear map is unknown, we offer a rigorously analytical framework of ES convergence and delay-robustness. When some a prior knowledge of nonlinear maps is available, we are able to provide a quantitative estimation on upper bounds of time delay and dither periods to keep ES systems to remain stable. {A suitable choice of ES parameters guarantees practical stability for any large known constant delay.}
\end{abstract}



\begin{keywords}
Extremum seeking \sep Time-varying delay  \sep Nonlinear system  \sep Time-delay approach  
\end{keywords}

\maketitle

\section{Introduction}
Extremum seeking (ES) is a powerful real-time optimization method steering the output of an unknown map towards its local optimum \cite{SCHEINKER2024111481}. Since the milestone work \cite{KRSTIC2000595} in which ES convergence was proved in a seriously mathematical way, numerous new advances have emerged: non-local ES stability \cite{TAN2006889}, ES with unknown control direction or high relative degree \cite{GUAY2024,GUAY22024}, sampled-data ES \cite{KHONG20132720,HAZELEGER2022110415}, ES based on Lie-bracket \cite{DURR20131538,LABAR2022110041}, ES with measurement noise \cite{YangxuefeiJFI2025}, etc..

Time delay (arising from computation, measurement, transmission, etc.) is one of the most common phenomena in engineering practice, and when disregarded, they render controlled systems unstable \cite{FridmanBook2014,Yangbook2020,ZENG2014492,GE2019500,HU201932,SCHEINKER2024111481,ZHANG2024128925}. As a kind of real-time online optimal control algorithm, the impact of time lag on ES systems is particularly serious. In few cases, delays can be used to stabilize the ES system \cite{Suttner2024}. In the state-of-the-art, the tools to handle time-delay in ES could be divided into two streams:
\begin{itemize}
	\item The 1st method is predictor-based feedback in which delays are compensated via various sorts of predictors, e.g., predictors in the form of integral \cite{Rusiti2019,Damir2021,RUSITI202175,Oliveira7466811,tiagoacc2020,TSUBAKINO2023111044,Yilmaz2024}, predictors in the form of observer \cite{Georgeacc2020}, and sequential predictors \cite{MALISOFF2021109462}. The advantage of the predictor method is that delay length tolerated by ES control systems could be large, whereas the cost that you pay for this benefit is that such ES controllers are complicated.
	\item The 2nd method is predictor-free feedback where delays are not managed. By this technique, ES feedback systems depend upon its own robustness to delays to be stable \cite{yang2023extremum,YangxuefeiIJRNC2025,JBARA2025106256,ZHU2022tac}. Predictor-free ES controllers are much simpler than predictor-based ES controllers under the premise that delays are not large.
\end{itemize}

To determine which method that we should select, a key problem that we ask ourselves is: what is the maximum time delay magnitude that the ES system is able to withstand? If an ES controller is capable of bearing somewhat large delays, then there is no need to resort to predictors to make things complex. This paper bridges the gap about the lack of analysis of delay-robustness in ES control laws.

Unlike the publications  \cite{Oliveira7466811,tiagoacc2020,TSUBAKINO2023111044,Yilmaz2024,MALISOFF2021109462,yang2023extremum,YangxuefeiIJRNC2025,JBARA2025106256} and \cite{ZHU2022tac} taking care of quadratic maps of which the resulting averaged systems are linear time-delay systems, this paper manages with non-quadratic nonlinear static maps in the presence of uncertain time-varying delays. {The literature \cite{RUSITI202175} employed the predictor to compensate for a large constant delay of single-variable Newton-based ES of quadratic static maps under small delay mismatch. Taking advantage of a recently developed time-delay approach \cite{FRIDMAN2020109287,ZHU2022109965,gaofeng2023}, the present paper provides a constructive framework to investigate delay-robustness (without delay compensation) of classical multi-variable predictor-free ES of non-quadratic static maps in the presence of time-varying uncertain delays {and a small disturbance}.} {Recently, based on the time-delay approach, \cite{LI2026101782} presented two sampled-data ES schemes, classical unbiased ES and bounded unbiased ES, for which the admissible measurement delay must be sufficiently small compared with the algorithmic time scale, ensuring that the delay effect appears only as a higher-order perturbation in the averaging analysis. Through a novel design of dither signals, we are able to manage with ES convergence under large constant delays plus small time-varying delay uncertainties.} By transforming the original ES system into a kind of retarded differential equation (RDE) with distributed delays, we provide Lyapunov-based sufficient conditions in the form of linear matrix inequalities (LMIs) to ensure the practical stability of the ES time-delay system. Under a couple of non-restrictive assumptions on nonlinear maps, a solution of LMIs results in the quantitative upper bounds for the dither period and the time delay that the ES system is able to tolerate. {Appropriate ES parameters can be determined for large known delay and sufficiently small delay uncertainty.} {{In \cite{jianzhongAuto2025}, we offered a network-based two-dimensional (2D) ES control framework in the presence of sampling and uncertain time-varying communication delays. The present article focuses on the delay robustness of continuous-time $n$-dimensional (ND) ES {in the presence of large constant part of delay and small measurement bias}.}} A preliminary conference version of this paper had been presented in \cite{jianzhongcdc2024} where the case of constant delays with no time-varying delay uncertainty and disturbances was considered.

\textbf{Notation}: Throughout this paper, ${\mathbb R^n}$ represents $n$-dimensional Euclidean space with the Euclidean norm $|\cdot|$, ${\mathbb N}$ denotes the set of non-negative integers. The set of $n$th-order continuously differentiable functions is denoted by $C^n$. We use the superscript `${\rm T}$' to stand for matrix transposition. The 2-norm of a matrix ${A\in\mathbb R^{n\times n}}$ is denoted by $|A|=\sqrt{\lambda_{\rm max}(A^{\rm T}A)}$. The notation `$*$' denotes the symmetric elements of a symmetric matrix, and $X<0$ represents that $X$ is a negative definite matrix.

\section{Extremum Seeking of Scalar Case}
\begin{figure}[pos = h]
	\centering
	\includegraphics[scale=0.32]{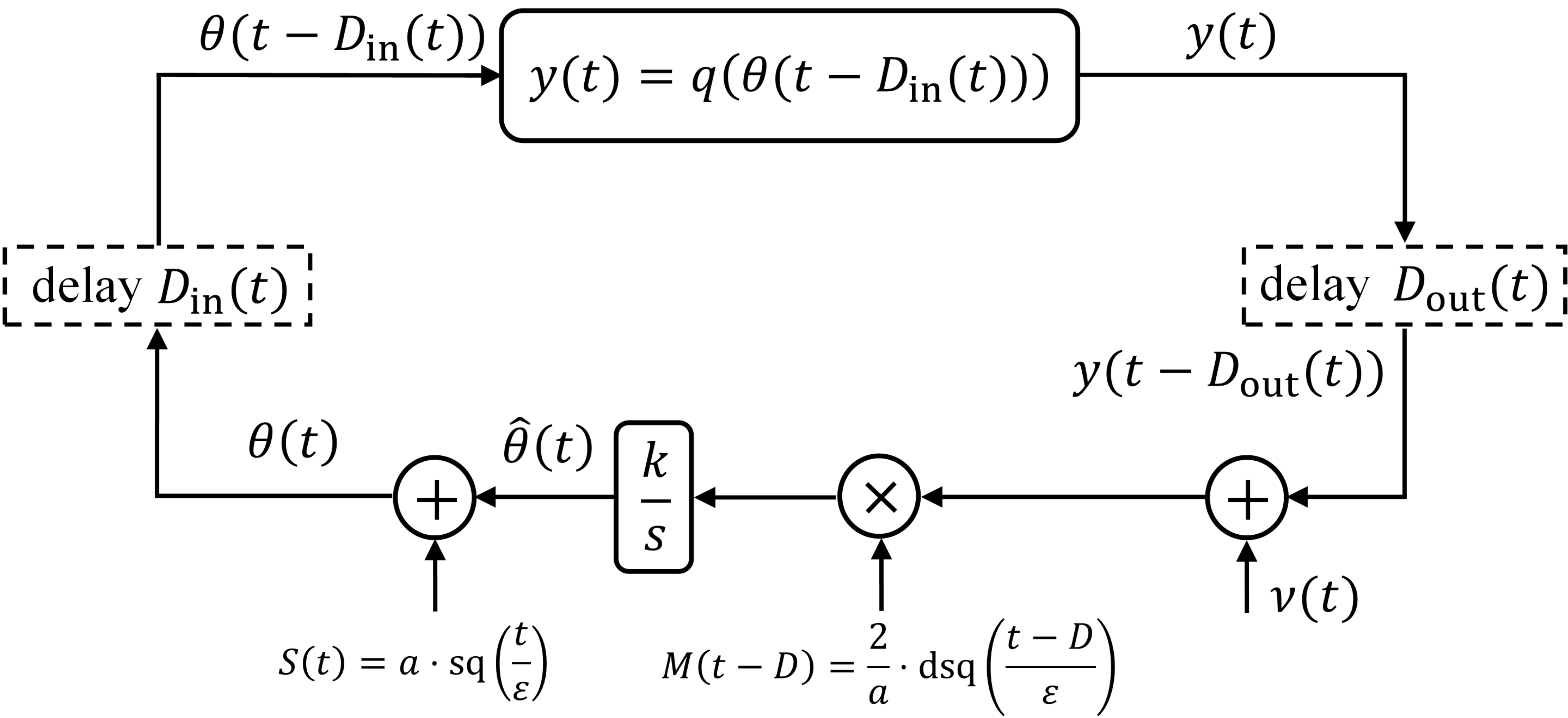}
	\DeclareGraphicsExtensions.
	\caption{{ES with {a disturbance and} time delays for static maps: scalar case.}}
	\label{figtvr_scalar}
\end{figure}
For conceptional clearness, we start with scalar maps in this section. Consider extremum seeking for nonlinear static maps
\begin{equation}\label{1tvr_map}
	\begin{array}{l}
		y(t)=q\big(\theta(t)\big),
	\end{array}
\end{equation}
where $y(t)\in \mathbb{R}$ is the measurable output, $\theta(t)\in \mathbb{R}$ is the scalar input, and $q(\cdot)$ is a nonlinear function. As depicted in Fig. \ref{figtvr_scalar}, {disturbance $\nu(t)\in \mathbb{R}$ characterizes the uncertainty associated with measurement bias, potentially induced by quantization in network-based control systems. The disturbance $\nu(t)$ is assumed to be bounded
\begin{equation}\label{bound_nu}
	\begin{array}{l}
		|\nu(t)|\leq \nu^*, \ \  \ t \geq 0,
	\end{array}
\end{equation}
with small, known constant $\nu^*\geq 0$.}
Moreover, the ES control scheme is subject to time-varying input delay $D_{\rm{in}}(t)$ and output delay $D_{\rm{out}}(t)$. We denote
\begin{equation}\label{tvrdelay}
	\begin{array}{l}
		h(t)=D_{\rm{in}}(t)+D_{\rm{out}}(t)=D+d(t),
	\end{array}
\end{equation}
where $D\geq0$ is a dominant constant delay assumed to be known, and $d(t)$ is a small time-varying delay uncertainty satisfying {the following relation }
\begin{equation}\label{tvrdelayub}
	\begin{array}{l}
		0\le d(t)\le\bar{d},
	\end{array}
\end{equation}
where $\bar{d}\geq0$ is a known upper bound. In this paper we do not seek to compensate for the time-varying delay $d(t)$ thus there is no need to know its specific form. From (\ref{tvrdelay})-(\ref{tvrdelayub}), we denote
\begin{equation}\label{tvrdelayadded}
	\begin{array}{l}
		\bar{h}\triangleq D+\bar{d}, \ \ \ \ 0\leq D\le h(t)\le\bar{h}.
	\end{array}
\end{equation}
From Fig. \ref{figtvr_scalar} and \eqref{tvrdelay}, it follows that
\begin{equation}\label{ey}
		y\left(t-D_{\rm{out}}(t)\right)=q\left(\theta\left(t-D_{\rm{out}}(t)-D_{\rm{in}}(t)\right)\right)
		 =q\left(\theta\left(t-h(t)\right)\right).
\end{equation}
The nonlinear map (\ref{1tvr_map}) satisfies the following assumptions.

\begin{assump} \cite{gaofeng2023}: There exist $\theta^*\in{\mathbb R}, \sigma>0$ and small $a>0$ such that $q(\theta)\in C^2(\theta^*-\sigma-a, \theta^*+\sigma+a)$, and the following hold:
\begin{align}
		&q'(\theta^*)=0,\ \ \ q''(\theta^*)=H<0, \label{1tvr_defineH} \\
		&q'(\theta^*+\Delta)\cdot \Delta \leq -\mu(\sigma)\cdot \Delta^2<0,\ \ \ \forall \ 0<|\Delta|<\sigma, \label{1tvr_boundsAssumption}
\end{align}
where $\mu(\sigma)>0$ is a known $\sigma$--dependent constant, which decreases monotonically in $\sigma$.  
\end{assump}

\begin{assump} \cite{gaofeng2023}: For any $|\Delta|<\sigma$ and $a$ defined in Assumption 1, given $\xi \in[-1, 1]$, we have
\begin{equation}\label{1tvr_AssumptionBound2}
	\begin{array}{l}
		|q(\theta^*+\Delta)|<q_0(\sigma),\\ |q'(\theta^*+\Delta)|<q_1(\sigma),\ \ \ |q''(\theta^*+\Delta)|<q_2(\sigma), \\ 
		|q'(\theta^*+\Delta)-q'(\theta^*)|<L_1|\Delta|,\\
		|q''(\theta^*+\Delta+a\xi)-q''(\theta^*)|<L_2|\Delta+a\xi|,
	\end{array}
\end{equation}
in which $q_0(\sigma), q_1(\sigma), q_2(\sigma), L_1$, and $ L_2$ are positive constants. 
\end{assump}

As clarified in \cite{gaofeng2023} (see Example 1 therein), Assumption 1 is not restrictive. It indicates the existence and uniqueness of the extremum $\theta=\theta^*$ within the neighborhood $U\left(\theta^*,\sigma+a\right)$. Here $a$ is the amplitude of the dither signal which will be employed in \eqref{1tvr_define_dithersignal}, whereas $\sigma$ defines a known search region which contains the unique extremum to be found. The knowledge of $\sigma$ is crucial for setting the initial condition to apply ES. This is similar to the local stability of a dynamical system in the sense that the system would never be stabilized if the initial state is out of the region of attraction. In the literature \cite{GUAY32024,Georgeacc2020}, the map (\ref{1tvr_map}) is usually unknown, the bounds on the map and its derivatives (\ref{1tvr_AssumptionBound2}) are unknown either, and they are used for analysis rather than design. In this article, when these bounds in (\ref{1tvr_AssumptionBound2}) are not known, we provide a qualitative analysis without approximation that is more accurate than the classical averaging-based analysis (see Remark 2). On the other hand, supposing that by some means a prior knowledge of the bounds (\ref{1tvr_AssumptionBound2}) are available (that is to say we are in the face of a ``grey box'' rather than a ``black box''), the developed time-delay approach allows a quantitative analysis by which we can calculate the upper bound on the key tuning parameter (the dither period given in (\ref{1tvr_define_dithersignal})), the ultimate bound of the ES estimation error (defined by (\ref{esterror})), and the biggest delay that the ES control system is able to withstand.

Under \eqref{ey}, the gradient ES controller is given below
\begin{equation}\label{1tvr_controller}
	\begin{aligned}
		{\theta}(t)&=\hat{\theta}(t)+S(t),\\
		\dot{\hat{\theta}}(t)&=k\cdot M(t-D)\cdot \left[y\left(t-D_{\rm{out}}(t)\right){{+\nu(t)}}\right]\\
		&=k\cdot M(t-D)\cdot \left[q\big(\theta\left(t-h(t)\right)\big){{+\nu(t)}}\right], \ \ \ \ t\ge \bar{h},
	\end{aligned}
\end{equation}
where $\hat{\theta}(t)$ is the real-time estimate of $\theta^*$ with the estimation error being
\begin{equation}\label{esterror}
	\tilde{\theta}(t)\triangleq\hat{\theta}(t)-\theta^*.
\end{equation}
The initial condition $\hat{\theta}(t)\in\left[\theta^*-\sigma_0,\theta^*+\sigma_0\right]$ and $\dot{\hat{\theta}}(t)\equiv0$ for $t\in\left[0,\bar{h}\right]$, where $\sigma_0<\sigma$ is a tuning parameter. It refers to the fact that, during the initial time interval $t\in\left[0,\bar{h}\right]$ whose length equals to the biggest transmission delay of the closed-loop, the ES controller does not work (the update law is turned off) so that the real-time estimate does not change \big(it is fixed as a constant in $\left[\theta^*-\sigma_0,\theta^*+\sigma_0\right]$\big) until the measurement of the output signal arrives at the controller side when $t=\bar{h}$. The sign of the adaptation gain $k$ is opposite to the sign of the Hessian $H$ in \eqref{1tvr_defineH}. The dither signals are of the forms
\begin{figure}[pos=H]
	\begin{center}
		\includegraphics[scale=0.3]{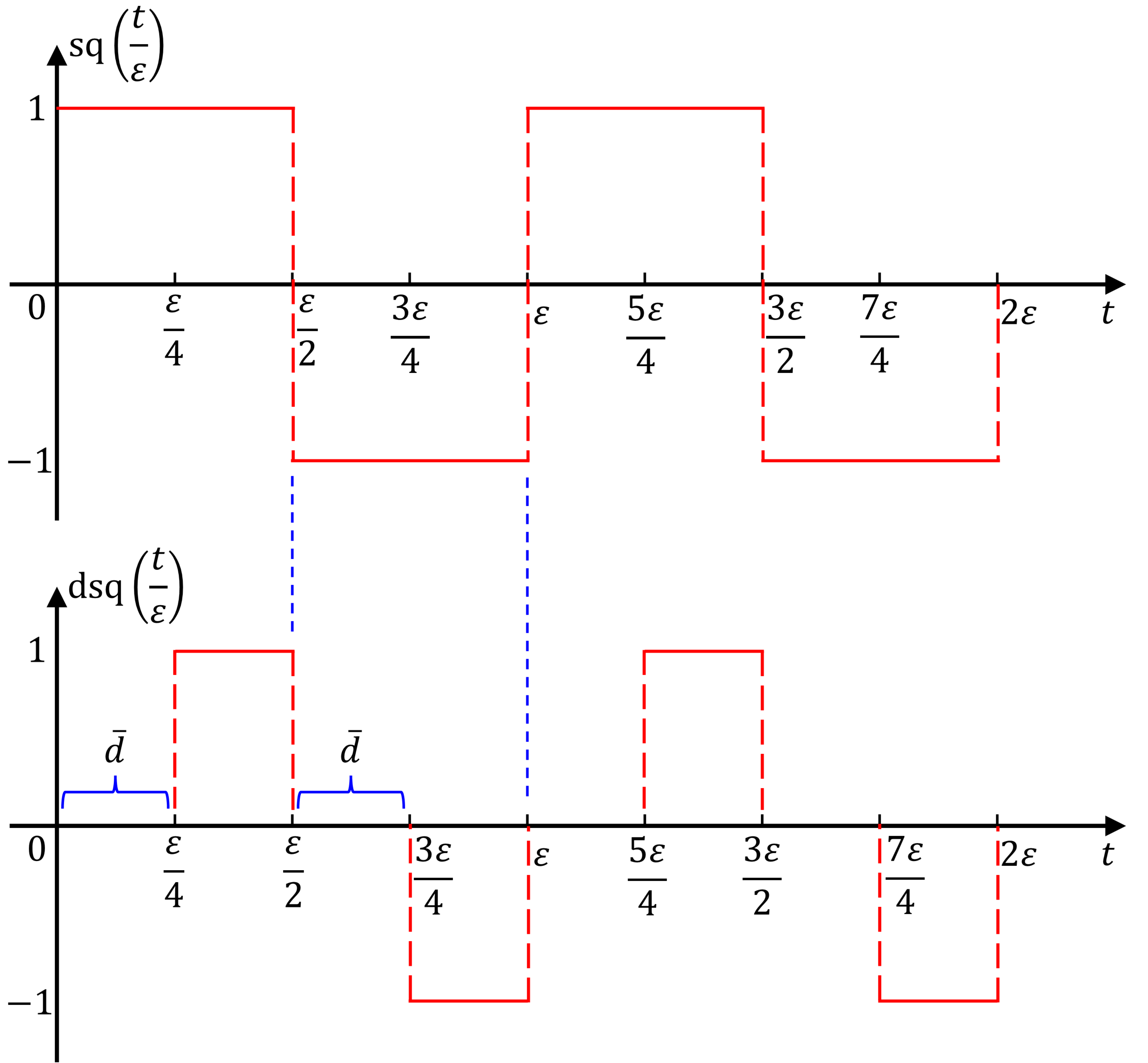}
		\DeclareGraphicsExtensions.
		\caption{{The square waves with the dither period satisfying $\varepsilon\geq4\bar{d}$}.}
		\label{1tvr_dither}
	\end{center}
\end{figure}
\begin{equation}
	\begin{array}{l}\label{1tvr_define_dithersignal}
		S(t)= {a\cdot\rm{sq}}\left(\frac{t}{\varepsilon}\right),\ \
		M(t-D)= \frac{2}{a}\cdot{\rm{dsq}}\left(\frac{t-D}{\varepsilon}\right),
	\end{array}
\end{equation}
with the dither amplitude $a>0$ and the dither period $\varepsilon>0$. 
As revealed in Fig. \ref{1tvr_dither}, the square waves ${\rm{sq}}\left(\frac{t}{\varepsilon}\right)$ and ${\rm{dsq}}\left(\frac{t}{\varepsilon}\right)$ are defined as
\begin{equation}\label{1tvr_define_dithersignal1}
	\begin{array}{l}
		{\rm{sq}}\left(\frac{t}{\varepsilon}\right)=\left\{\begin{array}{l}1,\ \ \ \   t\in \varepsilon\left[n, n+\frac{1}{2}\right), \vspace{0.4ex}\\
			-1,\ \   t\in \varepsilon\left[n+\frac{1}{2}, n+1\right),
		\end{array}  n\in {\mathbb N},
		\right.
	\end{array}
\end{equation}
\vspace{-0.6\baselineskip}
\begin{equation}\label{1tvr_define_dithersignal2}
	\begin{array}{l}
		{\rm{dsq}}\left(\frac{t}{\varepsilon}\right)=\left\{\begin{array}{l}1,\ \ \ \   t\in \varepsilon\left[n+\frac{1}{4}, n+\frac{1}{2}\right), \vspace{0.4ex}\\
			-1,\ \  t\in \varepsilon\left[n+\frac{3}{4}, n+1\right),\\
			0,\ \ \ \ \ \rm others,
		\end{array}  n\in {\mathbb N}.
		\right.
	\end{array}
\end{equation}
The phase-shifted dither $M(t-D)$ is introduced to address the constant delay $D$, and the signal ${\rm{dsq}}\left(\frac{t}{\varepsilon}\right)$ is tailored to address the unknown time-varying delay $d(t)$ in \eqref{tvrdelay}-(\ref{tvrdelayub}). We can select $\varepsilon>2\bar{d}$ to ensure that the 1st equality of \eqref{1tvr_integerdsq} given below is greater than zero (see the last paragraph of Remark 1). On the other hand, when the dither period $\varepsilon$ is given, the largest delay uncertainty allowed by the ES system satisfies $0\leq d(t)\le\bar{d}<\frac{\varepsilon}{2}$. For computational simplicity, in Section II of this paper, we let $\varepsilon\geq4\bar{d}$. Note that, the multiplicative demodulation signal $M(t-D)= \frac{2}{a}\cdot{\rm{dsq}}\left(\frac{t-D}{\varepsilon}\right)$ in \eqref{1tvr_define_dithersignal} invented to handle the time-varying uncertain delays is more sophisticated than the dither $M(t-D)= \frac{2}{a}\cdot{\rm{sq}}\big(\frac{t-D}{\varepsilon}\big)$ in the case of constant delays \cite{ZHU2022tac,jianzhongcdc2024}.

Under (\ref{1tvr_controller})-(\ref{esterror}), consider the Taylor expansion
\begin{equation}\label{1tvr_ytd}
	\begin{array}{l}
		q\big(\theta\left(t-h(t)\right)\big) =q\left(\theta^{*}+\tilde{\theta}\left(t-h(t)\right)+a{\rm{sq}}\left(\frac{t-h(t)}{\varepsilon}\right)\right)\\ \ \ \ \ \ \ \ \ \ \ \ \ \ \ \ \ \ \ \  \ \ \ \ \ 
		=q\left(\theta^{*}+\tilde{\theta}(t-h(t))\right)\\ \ \ \ \ \ \ \ \ \ \ \ \ \ \ \ \ \ \ \  \ \ \ \ \ \ \ \ \ +q'\left(\theta^{*}+\tilde{\theta}(t-h(t))\right)a{\rm{sq}}\left(\frac{t-h(t)}{\varepsilon}\right) +\frac{1}{2}H(t)a^2{\rm{sq^2}}\left(\frac{t-h(t)}{\varepsilon}\right),
	\end{array}
\end{equation}
where $H(t)\triangleq q''\left(\theta^{*}+\tilde{\theta}(t-h(t))+\zeta a{\rm{sq}}\left(\frac{t-h(t)}{\varepsilon}\right)\right)$, $\zeta\in(0,1)$. Then, substituting \eqref{1tvr_ytd} into \eqref{1tvr_controller}, and taking the time-derivative of \eqref{esterror} along \eqref{1tvr_controller}, the dynamics of the estimation error is governed by
\begin{equation}\label{1tvr_errorSystem1}
	\begin{array}{l}
		\dot{\tilde{\theta}}(t)=\dot{\hat{\theta}}(t)
		=k\cdot\frac{2}{a}\cdot{\rm{dsq}}\left(\frac{t-D}{\varepsilon}\right)\cdot \left[q\big(\theta\left(t-h(t)\right)\big){{+\nu(t)}}\right]\\
		=\frac{2k}{a}{\rm{dsq}}\left(\frac{t-D}{\varepsilon}\right)\Big[q\left(\theta^{*}+\tilde{\theta}\left(t-h(t)\right)\right) +q'\left(\theta^{*}+\tilde{\theta}\left(t-h(t)\right)\right)a{\rm{sq}}\left(\frac{t-h(t)}{\varepsilon}\right)+\frac{1}{2}H(t)a^2 {+\nu(t)}\Big]\\ =\frac{2k}{a}q\left(\theta^{*}+\tilde{\theta}\left(t-h(t)\right)\right){\rm{dsq}}\left(\frac{t-D}{\varepsilon}\right)\\
		\ \ \ \  +2kq'\left(\theta^{*}+\tilde{\theta}\left(t-h(t)\right)\right){\rm{dsq}}\left(\frac{t-D}{\varepsilon}\right){\rm{sq}}\left(\frac{t-h(t)}{\varepsilon}\right)
		\\ \ \ \ \ +kaH(t){\rm{dsq}}\left(\frac{t-D}{\varepsilon}\right) {+\frac{2k}{a}\nu(t){\rm{dsq}}\left(\frac{t-D}{\varepsilon}\right)},
	\end{array}
\end{equation}
where we used ${\rm{sq^2}}\left(\frac{t-h(t)}{\varepsilon}\right)=1$. We rewrite \eqref{1tvr_errorSystem1} as follows:
\begin{equation}\label{1tvr_errorSystemFR}
	\dot{\tilde{\theta}}(t)=k\mathcal{F}(t)+ka\mathcal{R}(t) {+k\mathcal{W}(t)},\ \ t\ge \bar{h},
\end{equation}
where
\begin{equation}\label{1tvr_DefineFR}
	\begin{array}{l}
		\mathcal{F}(t)\triangleq
		\frac{2}{a}q\left(\theta^{*}+\tilde{\theta}\left(t-h(t)\right)\right){\rm{dsq}}\left(\frac{t-D}{\varepsilon}\right) +2q'\left(\theta^{*}+\tilde{\theta}\left(t-h(t)\right)\right){\rm{dsq}}\left(\frac{t-D}{\varepsilon}\right){\rm{sq}}\left(\frac{t-h(t)}{\varepsilon}\right)\\ \ \ \ \ \ \ \ \ \ \ \ \ +aH{\rm{dsq}}\left(\frac{t-D}{\varepsilon}\right)
		,\\
		\mathcal{R}(t)\triangleq \left(H(t)-H\right){\rm{dsq}}\left(\frac{t-D}{\varepsilon}\right),\\
		{\mathcal{W}(t)\triangleq \frac{2}{a}\nu(t){\rm{dsq}}\left(\frac{t-D}{\varepsilon}\right)}.
	\end{array}
\end{equation}
In the classical averaging-like analysis, selecting the dither period $\varepsilon$ to be small enough, the terms $q\left(\theta^{*}+\tilde{\theta}\left(t-h(t)\right)\right)$ and $q'\left(\theta^{*}+\tilde{\theta}\left(t-h(t)\right)\right)$ are slowly time-varying comparative to the dither and are approximated as ``freezing'' constants. Besides, the term $ka\mathcal{R}(t)$ on the right-hand side of \eqref{1tvr_errorSystemFR} is neglected if the dither amplitude $a$ is small. The delayed dither signals are shown in Fig. \ref{1tvr_dither_delayed}. Therefore, we formally obtain an averaged-like system of \eqref{1tvr_errorSystemFR} such that
\begin{equation}\label{1averagedsystem}
	\begin{array}{l}
		\dot{\tilde{\theta}}_{av}(t)=\frac{2k}{a\varepsilon}\int_{t-\varepsilon}^t {\rm{dsq}}\left(\frac{\tau-D}{\varepsilon}\right)d\tau\cdot q\left(\theta^{*}+\tilde{\theta}_{av}\left(t-h(t)\right)\right)\\
		\ \ \  \ \ \ \ \ \  \ \ \ \ +\frac{2k}{\varepsilon}\int_{t-\varepsilon}^t{\rm{dsq}}\left(\frac{\tau-D}{\varepsilon}\right){\rm{sq}}\left(\frac{\tau-h(\tau)}{\varepsilon}\right)d\tau \cdot q'\left(\theta^{*}+\tilde{\theta}_{av}\left(t-h(t)\right)\right)\\
		\ \ \ \ \ \ \ \ \ \ \ \ \ +\frac{ak}{\varepsilon}\int_{t-\varepsilon}^t {\rm{dsq}}\left(\frac{\tau-D}{\varepsilon}\right)d\tau\cdot H {+k\mathcal{W}(t)}\\ \ \ \ \ \ \ \ \ \ \   
		=kq'\left(\theta^{*}+\tilde{\theta}_{av}\left(t-h(t)\right)\right) {+k\mathcal{W}(t)},
	\end{array}
\end{equation}
where we utilized the calculation
\begin{equation}\label{1tvr_integerdsq}
	\begin{array}{l}
		\frac{1}{\varepsilon}\int_{t-\varepsilon}^{t}{\rm{dsq}}
		\left(\frac{\tau-D}{\varepsilon}\right){\rm{sq}}\left(\frac{\tau-h(\tau)}{\varepsilon}\right){\rm{d}}\tau=\frac{1}{2}, \\ 
		\frac{1}{\varepsilon}\int_{t-\varepsilon}^{t}{\rm{dsq}}
		\left(\frac{\tau-D}{\varepsilon}\right){\rm{d}}\tau=0.
	\end{array}
\end{equation}
\begin{figure}[pos=H]
	\centering
		\includegraphics[scale=0.30]{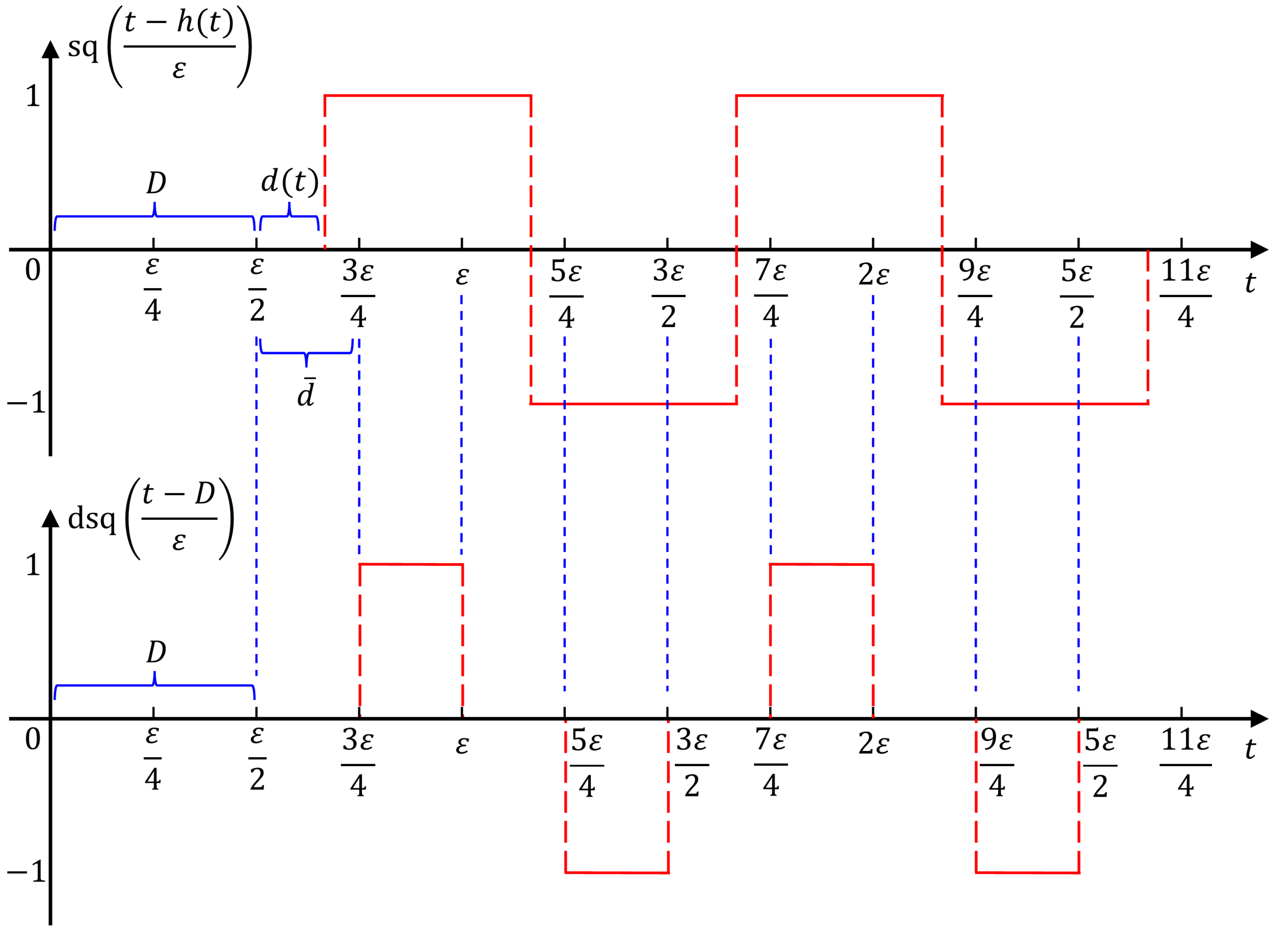}
		\caption{{The dither signals under $D=\frac{\varepsilon}{2}$ and $\bar{d}\leq\frac{\varepsilon}{4}$}.}
		\label{1tvr_dither_delayed}
\end{figure}
\begin{rmk}
Supposing ${\mathcal{W}(t)\equiv 0}$ and $h(t)\equiv0$ in \eqref{1averagedsystem} (the delay-free case), under \eqref{1tvr_boundsAssumption} in Assumption 1 and $k>0$, if $\tilde{\theta}_{av}(\cdot)>0$, then the $q'\big(\theta^*+\tilde{\theta}_{av}(\cdot)\big)<0$ so that the derivative $\dot{\tilde{\theta}}_{av}(t)<0$, which indicates $\tilde{\theta}_{av}(\cdot)$ will decrease to zero, and vice versa. Consequently, the delay-free averaged system is stable. When $h(t)>0$ in \eqref{1averagedsystem}, the research community are concerned with the robustness to the
length of delay such that the nonlinear time-delay system \eqref{1averagedsystem} still can be kept stable if the delay is not large. We first consider the case of constant delays such that $h(t)\equiv D$ and we omit the subscript ``\textit{av}'' and $\theta^*$ in \eqref{1averagedsystem} for notational simplicity. If the nonlinear system
\begin{equation}\label{avdk}
	\begin{array}{l}
		\dot{\tilde{\theta}}(t)=kq'\left(\tilde{\theta}\left(t-D\right)\right)
	\end{array}
\end{equation}
is stable for a fixed pair $(k,D)$ with $k>0$ and $D\in\left[0,\bar{D}\right]$, then theoretically the system \eqref{avdk} can be kept stable for any large delay $D$, but with almost zero decay rate $k$ (see \cite[Example 2.3]{FridmanBook2014}). To be specific, denoting $t=\omega \tau$ where $\omega>0$ is a scaling constant, and $\tilde{\theta}_s\left(\tau\right)=\tilde{\theta}\left(\omega\tau\right)=\tilde{\theta}(t)$, we have $\tilde{\theta}(t-D)=\tilde{\theta}\left(\omega\tau-D\right)=\tilde{\theta}\left(\omega\left(\tau-\frac{D}{\omega}\right)\right)=\tilde{\theta}_s\left(\tau-\frac{D}{\omega}\right)$ and
\begin{equation}\label{avdk2}
	\begin{array}{l}
		\frac{d}{d\tau}\tilde{\theta}_s\left(\tau\right)=\omega\frac{d}{dt}\tilde{\theta}\left(t\right)=\omega kq'\left(\tilde{\theta}(t-D)\right)\\
		=\omega kq'\left(\tilde{\theta}_s\left(\tau-\frac{D}{\omega}\right)\right)=k_sq'\left(\tilde{\theta}_s\left(\tau-D_s\right)\right),
	\end{array}
\end{equation}
where $k_s=\omega k$, $D_s=\frac{D}{\omega}$. The system \eqref{avdk2} recovers the form of \eqref{avdk} and is stable as long as $0\le D_s=\frac{D}{\omega}\le \bar{D}$, i.e., $0\le D=\omega D_s\le \omega\bar{D}$. In other words, the delay $D$ is magnified $\omega$ times whereas the control gain $k$ is shrunk $\omega$ times. The scaling coefficient $\omega$ could be arbitrarily huge in principle. For quadratic maps in which the averaged system is linear \cite{yang2023extremum}, the designers can decrease the decay rate $k$ and manage with any constant delay $D$. Here we have the similar findings for non-quadratic maps that we also manage with any constant delay but with a smaller decay rate. In practice, we should ensure the real-time convergence rate so that the control gain cannot be arbitrarily small. Thus, provided a given $k$, we are interested in the upper bound of delay $D$ that the ES system is able to tolerate.

Besides, it is worth mentioning that the dither signals \eqref{1tvr_define_dithersignal1}-\eqref{1tvr_define_dithersignal2} are meticulously designed to guarantee that the average of the dominant stabilization term (corresponding to the 2nd  line of \eqref{1averagedsystem}) is not zero, whereas the average of the perturbation terms (which refers to the 1st and 3rd lines of \eqref{1averagedsystem}) is zero. Under the present dithers shown in Figs. \ref{1tvr_dither}-\ref{1tvr_dither_delayed}, the convergence rate of \eqref{1averagedsystem} is $k$, and the largest delay uncertainty allowed is $\bar{d}=\frac{\varepsilon}{4}$. When the upper bound of the time-varying delay uncertainty is larger (e.g., $\bar{d}=\frac{\varepsilon}{3}$), the convergence rate is $\frac{2}{3}k$ which is smaller. A trade-off is there. 
\end{rmk}

Next, we apply the time-delay approach {(refer to, \cite{FRIDMAN2020109287,ZHU2022109965})} to the ES system \eqref{1tvr_errorSystemFR}. To be specific, we integrate $\mathcal{F}(t)$ over the interval $[t-\varepsilon, t]$ for $t\geq \varepsilon+\bar{h}$ such that
\begin{equation}\label{1tvr_defineF}
	\begin{array}{l}
		\frac{1}{\varepsilon}\int_{t-\varepsilon}^{t}\mathcal{F}(\tau){\rm{d}}\tau=\frac{2}{a\varepsilon}
		\int_{t-\varepsilon}^{t}q\big(\theta^{*}+\tilde{\theta}(\tau-h(\tau))\big){\rm{dsq}}\left(\frac{\tau-D}{\varepsilon}\right){\rm{d}}\tau\\ \ \ \ \ \ \ \ \ \ \ \ \ \ \ \ \ \ \ \ \ \ \ \ \ \ \ \ +\frac{2}{\varepsilon}\int_{t-\varepsilon}^{t}
		q'\big(\theta^{*}+\tilde{\theta}(\tau-h(\tau))\big){\rm{dsq}}\left(\frac{\tau-D}{\varepsilon}\right){\rm{sq}}\left(\frac{\tau-h(\tau)}
		{\varepsilon}\right){\rm{d}}\tau,
	\end{array}
\end{equation}
in which the condition $\frac{a}{\varepsilon}\int_{t-\varepsilon}^t {\rm{dsq}}\left(\frac{\tau-D}{\varepsilon}\right)d\tau\cdot H=0$ is employed.
To cope with the 1st term on the right-hand side of (\ref{1tvr_defineF}), using $\int_{t-\varepsilon}^{t}{\rm{dsq}}\left(\frac{\tau-D}{\varepsilon}\right){\rm{d}}\tau \cdot q\big(\theta^{*}+\tilde{\theta}(t-D)\big)=0$, it follows that
\begin{equation}\label{1tvr_integerF2}
	\begin{array}{l}
		\frac{2}{a\varepsilon}\int_{t-\varepsilon}^{t}q\big(\theta^{*}+\tilde{\theta}(\tau-h(\tau))\big){\rm{dsq}}\left(\frac{\tau-D}{\varepsilon}\right){\rm{d}}\tau\\
		=-\frac{2}{a\varepsilon}
		\int_{t-\varepsilon}^{t}{\rm{dsq}}\left(\frac{\tau-D}{\varepsilon}\right)\left[q\big(\theta^{*}+\tilde{\theta}(t-D)\big)-q\big(\theta^{*}+
		\tilde{\theta}(\tau-h(\tau))\big)\right]{\rm{d}}\tau\\
		=-\frac{2}{a\varepsilon}
		\int_{t-\varepsilon}^{t}{\rm{dsq}}\left(\frac{\tau-D}{\varepsilon}\right)\int_{\tau-h(\tau)}^{t-D}q'\big(\theta^{*}+
		\tilde{\theta}(s)\big)\dot{\tilde{\theta}}(s){\rm{d}}s{\rm{d}}\tau.
	\end{array}
\end{equation}
To deal with the 2nd term on the right-hand side of (\ref{1tvr_defineF}), taking into account the 1st formula of (\ref{1tvr_integerdsq}), we have
\begin{equation}\label{1tvr_integerF1}
	\begin{array}{l}
		\frac{2}{\varepsilon}\int_{t-\varepsilon}^{t}
		q'\big(\theta^{*}+\tilde{\theta}(\tau-h(\tau))\big){\rm{dsq}}\left(\frac{\tau-D}{\varepsilon}\right){\rm{sq}}\left(\frac{\tau-h(\tau)}
		{\varepsilon}\right){\rm{d}}\tau\\
		=\frac{2}{\varepsilon}\int_{t-\varepsilon}^{t}
		{\rm{dsq}}\left(\frac{\tau-D}{\varepsilon}\right){\rm{sq}}\left(\frac{\tau-h(\tau)}
		{\varepsilon}\right)\left[q'\big(\theta^{*}+\tilde{\theta}(\tau-h(\tau))\big)\right.\\
		\ \ \left.- q'\big(\theta^{*}+\tilde{\theta}(t-D)\big)+q'\big(\theta^{*}+\tilde{\theta}(t-D)\big)\right]{\rm{d}}\tau\\
		=\frac{2}{\varepsilon}\int_{t-\varepsilon}^{t}
		{\rm{dsq}}\left(\frac{\tau-D}{\varepsilon}\right){\rm{sq}}\left(\frac{\tau-h(\tau)}
		{\varepsilon}\right){\rm{d}}\tau \cdot q'\big(\theta^{*}+\tilde{\theta}(t-D)\big)\\
		\ \  \ -\frac{2}{\varepsilon}\int_{t-\varepsilon}^{t}
		{\rm{dsq}}\left(\frac{\tau-D}{\varepsilon}\right){\rm{sq}}\left(\frac{\tau-h(\tau)}
		{\varepsilon}\right)\left[q'\big(\theta^{*}+\tilde{\theta}(t-D)\big)-q'\big(\theta^{*}+\tilde{\theta}(\tau-h(\tau))\big)\right]{\rm{d}}\tau\\
		=q'\big(\theta^{*}+\tilde{\theta}(t-D)\big)-\frac{2}{\varepsilon}\int_{t-\varepsilon}^{t}
		{\rm{dsq}}\left(\frac{\tau-D}{\varepsilon}\right){\rm{sq}}\left(\frac{\tau-h(\tau)}
		{\varepsilon}\right) \int_{\tau-h(\tau)}^{t-D}q''\big(\theta^{*}+
		\tilde{\theta}(s)\big)\dot{\tilde{\theta}}(s){\rm{d}}s{\rm{d}}\tau.
	\end{array}
\end{equation}
To transform the ES system \eqref{1tvr_errorSystemFR} into a kind of time-delay system, we define
\begin{equation}\label{1tvr_DefineG}
	\begin{array}{l}
		G(t)=\frac{1}{\varepsilon}
		\int_{t-\varepsilon}^{t}(\tau-t+\varepsilon)\mathcal{F}(\tau){\rm{d}}\tau,
	\end{array}
\end{equation}
and the following relation holds
\begin{equation}\label{1tvr_systemFF}
	\begin{array}{l}
		\frac{\rm{d}}{{\rm{d}}t}\Big[{\tilde{\theta}}(t)-kG(t)\Big]=\dot{\tilde{\theta}}(t)-k\mathcal{F}(t)+\frac{k}{\varepsilon}
		\int_{t-\varepsilon}^{t}\mathcal{F}(\tau){\rm{d}}\tau.
	\end{array}
\end{equation}

Substituting (\ref{1tvr_integerF2})-(\ref{1tvr_integerF1}) into \eqref{1tvr_defineF}, and further substituting (\ref{1tvr_errorSystemFR}) and \eqref{1tvr_defineF} into (\ref{1tvr_systemFF}), the closed-loop error system is given below
\begin{equation}\label{1tvr_errorSystemThetaYR}
	\begin{array}{l}
		\frac{\rm{d}}{{\rm{d}}t}\Big[{\tilde{\theta}}(t)-kG(t)\Big]=kq'\big(\theta^{*}+\tilde{\theta}(t-D)\big)-\frac{2k}{a}Y_1(t)
		 -2kY_2(t)+ka\mathcal{R}(t){+k\mathcal{W}(t)},\ \ \  t\geq \varepsilon+\bar{h},
	\end{array}
\end{equation}
where
\begin{equation}\label{1tvr_DefineY}
	\begin{array}{l}
		Y_1(t)\triangleq\frac{1}{\varepsilon}
		\int_{t-\varepsilon}^{t}\int_{\tau-h(\tau)}^{t-D}{\rm{dsq}}\left(\frac{\tau-D}{\varepsilon}\right)q'\big(\theta^{*}+
		\tilde{\theta}(s)\big)\dot{\tilde{\theta}}(s){\rm{d}}s{\rm{d}}\tau,\\
		Y_2(t)\triangleq\frac{1}{\varepsilon}\int_{t-\varepsilon}^{t}
		\int_{\tau-h(\tau)}^{t-D}{\rm{dsq}}\left(\frac{\tau-D}{\varepsilon}\right){\rm{sq}}\left(\frac{\tau-h(\tau)}
		{\varepsilon}\right) q''\big(\theta^{*}+
		\tilde{\theta}(s)\big)\dot{\tilde{\theta}}(s){\rm{d}}s{\rm{d}}\tau,
	\end{array}
\end{equation}
with $\dot{\tilde{\theta}}(t)$ defined by (\ref{1tvr_errorSystemFR}). Notice that, under Assumption 2, if ${\tilde{\theta}}(t)$ can be kept bounded and the bound is of order $O\left(1\right)$, {then $\dot{\tilde{\theta}}(t)$ is of order $O\left(\frac{1}{a}\right)$. Hence, in the system \eqref{1tvr_errorSystemThetaYR}, the disturbance term $G(t)$ is of order $O\left(\frac{\varepsilon}{a}\right)$, the term $\frac{2}{a}Y_1(t)$ is of order $O\left(\frac{\varepsilon+\bar{d}}{a^2}\right)$, and $Y_2(t)$ is of order $O\left(\frac{\varepsilon+\bar{d}}{a}\right)$. In addition, the Taylor remainder term $a\mathcal{R}(t)$ is of order $O(a)$. Selecting $a={(\varepsilon+\bar{d})}^{\frac{1}{3}}$, all the disturbances $G(t)$, $\frac{2}{a}Y_1(t)$, $Y_2(t)$, $a\mathcal{R}(t)$, {and $\mathcal{W}(t)$} decay to zero when $\varepsilon$, $\bar{d}$, {and $\nu^*$} are sufficiently small.}

\begin{rmk} Comparing the averaging-based analysis (namely, the transformation \eqref{1tvr_errorSystemFR}$\rightarrow$\eqref{1averagedsystem}) with the time-delay approach (i.e., the transformation \eqref{1tvr_errorSystemFR}$\rightarrow$\eqref{1tvr_errorSystemThetaYR}), it is evident that the latter does not employ any approximation. Unlike the averaging method in \eqref{1averagedsystem}-\eqref{1tvr_integerdsq}, the terms $q\left(\theta^{*}+\tilde{\theta}\left(t-h(t)\right)\right)$ and $q'\left(\theta^{*}+\tilde{\theta}\left(t-h(t)\right)\right)$ are not treated as ``freezing'' constants and not put outside of the integral in \eqref{1tvr_defineF}, and the term $ka\mathcal{R}(t)$ is not ignored. {The dominant part of \eqref{1tvr_errorSystemThetaYR}, i.e., $\frac{\rm{d}}{{\rm{d}}t}{\tilde{\theta}}(t)=kq'\big(\theta^{*}+\tilde{\theta}(t-D)\big)$, is consistent with the averaged system \eqref{1averagedsystem} (when $d(t)=0$).} For the original ES system \eqref{1tvr_errorSystemFR}, the time-delay plant \eqref{1tvr_errorSystemThetaYR} is a more precise representation than the averaged system \eqref{1averagedsystem} in the sense that the disturbances $G(t), Y_1(t), Y_2(t)$, and $a\mathcal{R}(t)$ are expressed in an explicit way. As a result, the stability of the original ES system \eqref{1tvr_errorSystemFR} can be concluded from the stability of the distributed time-delay system \eqref{1tvr_errorSystemThetaYR}. \end{rmk}

For analysis simplicity, we further employ the following change of variable \cite{YangxuefeiCDC2023},
\begin{equation*}\label{1tvr_denoteZ}
	z(t)=\tilde{\theta}(t)-kG(t).
\end{equation*}
Then the time-delay system of neutral type (\ref{1tvr_errorSystemThetaYR}) is transformed into the ordinary differential equation (ODE) with disturbance terms as follows:
\begin{equation}\label{1tvr_errorSystem_Phi_ZYR}
	\begin{array}{l}
		\dot z(t)=kq'\big(\theta^{*}+z(t-D)+kG(t-D)\big)
		-\frac{2k}{a}Y_1(t) -2kY_2(t)+ka\mathcal{R}(t){+k\mathcal{W}(t)}
		\\ \ \ \ \ \ \ \   =kq'\big(\theta^{*}+z(t-D)\big)+k\Phi(t)
		-\frac{2k}{a}Y_1(t) -2kY_2(t)+ka\mathcal{R}(t){+k\mathcal{W}(t)},\ \ \  t\geq \varepsilon+\bar{h},
	\end{array}
\end{equation}
where
\begin{equation}\label{1tvr_define_Phi}
	\begin{array}{l}
		\Phi(t)=q'\big(\theta^{*}+z(t-D)+kG(t-D)\big) -q'\big(\theta^{*}+z(t-D)\big).
	\end{array}
\end{equation}
Keeping in mind $\sigma$ defined in Assumption 1, we postulate the overall bound of the estimation error satisfies
\begin{equation}\label{1tvr_overallBound}
	\begin{array}{l}
		\Big|\tilde{\theta}(t)\Big|<\sigma,\ \  t\geq 0.
	\end{array}
\end{equation}
Then, given the definitions
of (\ref{1tvr_controller})-(\ref{1tvr_define_dithersignal}), we have the relation: $\theta^*-\sigma-a<\theta(t)=\theta^*+\tilde{\theta}(t)+a{\rm{sq}}\left(\frac{t}{\varepsilon}\right)<\theta^*+\sigma+a$, which is consistent with the domain specified in Assumption 1. With the initial condition underneath \eqref{esterror}, the overall bound of real-time estimate error \eqref{1tvr_overallBound} will always be guaranteed by the LMI conditions (\ref{1tvr_lmi1}) in Theorem 1. It is worth mentioning that the overall bound (\ref{1tvr_overallBound}) is distinct from the ultimate bound (\ref{ubscalar}) in Theorem 1 which is adjustable via the tuning parameters $\varepsilon$ and $a$ (see the illustrative Fig. 4 in \cite{gaofeng2023}).

Under the bounds (\ref{1tvr_AssumptionBound2}) in Assumption 2, and the overall bound (\ref{1tvr_overallBound}), the upper bounds on $\mathcal{F}(t), \mathcal{R}(t)$, and $\dot{\tilde{\theta}}(t)$ are obtained from (\ref{1tvr_errorSystemFR})-(\ref{1tvr_DefineFR}) such that
\begin{equation}\label{1tvr_boundFR}
	\begin{array}{l}
		\left|\mathcal{F}(t)\right|\leq \frac{2}{a}\left|q\big(\theta^{*}+\tilde{\theta}(t-h(t))\big)\right|+2\left|q'\big(\theta^{*}+\tilde{\theta}(t-h(t))\big)\right| +a|H|
		\\
		\ \ \ \ \ \  \ \ \ \  <\frac{2}{a}q_0(\sigma)+2q_1(\sigma)+aq_2(\sigma)\triangleq \Delta_\mathcal{F},\\
		|\mathcal{R}(t)|\leq  \left|q''\left(\theta^{*}+\tilde{\theta}(t-h(t))+\zeta a{\rm{sq}}\left(\frac{t-h(t)}{\varepsilon}\right)\right)-q''\left(\theta^{*}\right)\right|\\
		\ \ \ \ \ \  \ \ \ \  <L_2\left|\tilde{\theta}(t-h(t))+\zeta a{\rm{sq}}\big(\frac{t-h(t)}{\varepsilon}\big)\right|  <L_2(\sigma+a)\triangleq \Delta_\mathcal{R},\\
		{|\mathcal{W}(t)|=\left|\frac{2}{a}\nu(t){\rm{dsq}}\big(\frac{t-D}{\varepsilon}\big)\right|\leq \frac{2}{a}\nu^*},\\
		\left|\dot{\tilde{\theta}}(t)\right|\leq |k\mathcal{F}(t)|+|ka\mathcal{R}(t)| {+|k\mathcal{W}(t)|}<k\big(\Delta_\mathcal{F}+a\Delta_\mathcal{R} {+\frac{2}{a}\nu^*}\big)\triangleq \Delta_\theta.
	\end{array}
\end{equation}
Accordingly, based on (\ref{1tvr_DefineG}), (\ref{1tvr_DefineY}), and \eqref{1tvr_errorSystem_Phi_ZYR}, it can be derived that
\begin{equation}\label{1tvr_boundsGY}
	\begin{array}{l}
		|G(t)|= \frac{1}{\varepsilon}\left|\int_{t-\varepsilon}^{t}(\tau-t+\varepsilon)\mathcal{F}(\tau){\rm{d}}\tau\right|
		\\ \ \ \ \ \ \  \ \ \ \  \leq \frac{1}{\varepsilon}\int_{t-\varepsilon}^{t}(\tau-t+\varepsilon){\rm{d}}\tau\cdot|\mathcal{F}(\tau)|<\frac{\Delta_\mathcal{F}}{2}\varepsilon,\\
		|Y_1(t)|=\frac{1}{\varepsilon}
		\left|\int_{t-\varepsilon}^{t}\int_{\tau-h(\tau)}^{t-D}{\rm{dsq}}\left(\frac{\tau-D}{\varepsilon}\right)q'\big(\theta^{*}+
		\tilde{\theta}(s)\big) \dot{\tilde{\theta}}(s){\rm{d}}s{\rm{d}}\tau\right|\\ \ \ \ \ \ \  \ \ \ \ \
		\leq \frac{1}{\varepsilon}\int_{t-\varepsilon}^{t}
		\int_{\tau-h(\tau)}^{t-D}{\rm{d}}s{\rm{d}}\tau\cdot \left|q'\big(\theta^{*}+
		\tilde{\theta}(s)\big)\right| \cdot \left|\dot{\tilde{\theta}}(s)\right|\\  \ \ \ \ \ \  \ \ \ \ \
		<\frac{1}{2}{q_1(\sigma)\Delta_\theta}(\varepsilon+2\bar{d}),\\
		|Y_2(t)|=\frac{1}{\varepsilon}\left|\int_{t-\varepsilon}^{t}
		\int_{\tau-h(\tau)}^{t-D}{\rm{dsq}}\left(\frac{\tau-D}{\varepsilon}\right){\rm{sq}}\big(\frac{\tau-h(\tau)}
		{\varepsilon}\big) q''\big(\theta^{*}+
		\tilde{\theta}(s)\big)\dot{\tilde{\theta}}(s){\rm{d}}s{\rm{d}}\tau\right|\\  \ \ \ \ \ \  \ \ \ \ \
		\leq \frac{1}{\varepsilon}\int_{t-\varepsilon}^{t}
		\int_{\tau-h(\tau)}^{t-D}{\rm{d}}s{\rm{d}}\tau\cdot \left|q''\big(\theta^{*}+
		\tilde{\theta}(s)\big)\right| \cdot \left|\dot{\tilde{\theta}}(s)\right|\\  \ \ \ \ \ \  \ \ \ \ \
		<\frac{1}{2}{q_2(\sigma)\Delta_\theta}(\varepsilon+2\bar{d}),\\
		\left|\dot z(t)\right|\leq \left|kq'\big(\theta^{*}+z(t-D)+kG(t-D)\big)\right| +\left|\frac{2k}{a}Y_1(t)\right|+\left|2kY_2(t)\right|+\left|ka\mathcal{R}(t)\right| {+|k\mathcal{W}(t)|}\\  \ \ \ \ \ \  \ \ \ 
		<k\big(q_1(\sigma)+\frac{1}{a}{q_1(\sigma)\Delta_\theta}(\varepsilon+2\bar{d})  +{q_2(\sigma)\Delta_\theta}(\varepsilon+2\bar{d})+a
		\Delta_\mathcal{R} {+\frac{2}{a}\nu^*}\big)\triangleq \Delta_z.
	\end{array}
\end{equation}
Besides, through the differential mean value theorem for \eqref{1tvr_define_Phi}, and under the bounds (\ref{1tvr_AssumptionBound2}) in Assumption 2 and (\ref{1tvr_boundsGY}), we arrive at
\begin{equation}\label{1tvr_boundphi}
	\begin{array}{l}
		|\Phi(t)|\leq \left|q''\big(\theta^{*}+z(t-D)+\zeta kG(t-D)\big)\right|\cdot|kG(t-D)|  
		<q_2(\sigma)\frac{k\Delta_\mathcal{F}}{2}\varepsilon,
	\end{array}
\end{equation}
where $\zeta\in(0,1)$.	

\begin{theorem}\label{theorem1} Under Assumptions 1-2 with a given $\sigma$, consider the closed-loop system consisting of the scalar map (\ref{1tvr_map}) and the ES controller (\ref{1tvr_controller}), as well as the initial condition $\left|{\tilde{\theta}}(t)\right|\leq \sigma_0<\sigma$ for $t\in\left[0,\bar{h}\right]$. Given {$D\geq 0$} and tuning parameters $\sigma_0, k, a, \delta, \bar{d}^*, \varepsilon^*>0$, and {$\nu^*\geq 0$}, let scalar decision variables $\lambda_1, \lambda_2, \lambda_3, \lambda_4, \lambda_5, {\gamma
>0}$, and $P\geq
1, Q>0, R>0$ satisfy the following LMIs:
\begin{equation}\label{1tvr_lmi1}
	\begin{array}{l}
		\Omega_0=\begin{bmatrix}
			\Psi    &\Xi\\
			*       &-R
		\end{bmatrix}< 0,\ \ \ \ \ \ \Omega_1=\frac{\varpi}{2\delta}<\big(\sigma-\frac{k\Delta_\mathcal{F}}{2}\varepsilon^*\big)^2,\\
		\Omega_2=(P+{Q}D)\varkappa^2+RD{{}^3}\Delta_z^2<\big(\sigma-\frac{k\Delta_\mathcal{F}}{2}\varepsilon^*\big)^2,\\
		
	\end{array}
\end{equation}
where $\Psi$ is the symmetric matrix composed of
\begin{equation}
	\begin{array}{l}\label{1tvr_Psi}
		\Psi_{11}=-\big(2k\mu(\sigma)-2\delta\big) P+Q, \ \ \ \ 
		\Psi_{13}=kL_1P, \\
		\Psi_{15}=kP, \ \ \ \  \Psi_{16}=\frac{2k}{a}P,\ \ \ \ \  \Psi_{17}=2kP,
		\\
		\Psi_{18}=kaP,\ \ \  {\Psi_{19}=kP,}  \ \ \ \ \
		\Psi_{22}=-Qe^{-2\delta D}+\lambda_1 L_1^2,\\
		\Psi_{33}=-Re^{-2\delta D},\ \ \ \ 
		\Psi_{44}=-\lambda_1,\ \ \ \ 
		\Psi_{55}=-\frac{\lambda_2}{\varepsilon^*},\\
		\Psi_{66}=-\frac{\lambda_3}{\varepsilon^*+2\bar{d}^*},\ \ \ \ 
		\Psi_{77}=-\frac{\lambda_4}{\varepsilon^*+2\bar{d}^*},\ \ \ \ 
		\Psi_{88}=-\lambda_5a, \ \ \ \ {\Psi_{99}=-\gamma,} 
	\end{array}
\end{equation}
with other terms being zero, and
\begin{equation}\label{1tvr_Xi}
	\begin{array}{l}
		\Xi=\left[0,0,0,k,k,\frac{2k}{a},2k,ka, {k}\right]^{\rm T}DR,\\
		\varkappa=\sigma_0+\Delta_\theta \varepsilon^*+\frac{k \Delta_\mathcal{F}}{2}\varepsilon^*,\ \
		\\
		\varpi=\frac{\lambda_2q_2^2(\sigma)k^2 \Delta_\mathcal{F}^2}{4}\varepsilon^*+\frac{\lambda_3 q_1^2(\sigma)\Delta_\theta^2}{4}(\varepsilon^*+2\bar{d}^*)
		 +\frac{\lambda_4 q_2^2(\sigma)\Delta_\theta^2}{4}(\varepsilon^*+2\bar{d}^*)+\lambda_5a \Delta_\mathcal{R}^2 {+\frac{4\gamma}{a^2}{\nu^*}^2}.
	\end{array}
\end{equation}
Then, for $\forall$ $\bar{d} \in [0, \bar{d}^*]$ and $\forall$ $\varepsilon \in (4\bar{d}, \varepsilon^*]$, the estimation error satisfies
\begin{equation}\label{1tvr_theorem1boundtheta}
	\begin{array}{l}
		\Big|{\tilde{\theta}}(t)\Big|< \left|{\tilde{\theta}}(\bar{h})\right|+\Delta_\theta (t-\bar{h})<\sigma, \ \
		t\in \left[\bar{h}, \varepsilon+\bar{h}\right],\\
		\left|{\tilde{\theta}}(t)\right|<  \left[\big((P+{Q}D)\varkappa^2+RD^3\Delta_z^2\big)e^{-2\delta(t-\varepsilon-\bar{h})}\big(1-e^{-2\delta(t-\varepsilon-\bar{h})}\big)\frac{\varpi}{2\delta}\right]^{\frac{1}{2}} +\frac{k\Delta_\mathcal{F}}{2}\varepsilon<\sigma,\ \ \ t\in \left[\varepsilon+\bar{h}, \infty \right),
	\end{array}
\end{equation}
and is exponentially attracted to the set
\begin{equation}\label{ubscalar}
	\begin{array}{l}
		\Theta=\left\{\tilde{\theta} \in \mathbb{R}: \left|{\tilde{\theta}}\right|< \sqrt{\frac{\varpi}{2\delta}}+\frac{k\Delta_\mathcal{F}}{2}\varepsilon \right\},
	\end{array}
\end{equation}
which is adjustable via the tuning parameters $\varepsilon$ and $a$. Moreover, LMIs \eqref{1tvr_lmi1} are always feasible for small enough $\bar{d}^*,\varepsilon^*, {\nu^*}$, $a$, and $k$. 
\end{theorem}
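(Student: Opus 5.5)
The plan is to prove the bounds on the two intervals separately and then combine them through the change of variable $z(t)=\tilde\theta(t)-kG(t)$.

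\emph{Initial interval.} On $[\bar h,\varepsilon+\bar h]$, as long as $|\tilde\theta|<\sigma$ holds, the derivative bound $|\dot{\tilde\theta}|<\Delta_\theta$ from \eqref{1tvr_boundFR} gives $|\tilde\theta(t)|<|\tilde\theta(\bar h)|+\Delta_\theta(t-\bar h)\le\sigma_0+\Delta_\theta\varepsilon^*$. Since $\Omega_2<(\sigma-\frac{k\Delta_\mathcal F}{2}\varepsilon^*)^2$ and $P\ge1$, this quantity is $<\sigma$. A standard continuity argument (first exit time) removes the provisional assumption \eqref{1tvr_overallBound} on this interval. Consequently $|z(t)|\le|\tilde\theta(t)|+k|G(t)|<\varkappa$ on $[\bar h,\varepsilon+\bar h]$, which fixes the initial history of $z$ used below.

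\emph{Lyapunov--Krasovskii functional.} For $t\ge\varepsilon+\bar h$ I would take
\[
V(t)=Pz^2(t)+\int_{t-D}^t e^{-2\delta(t-s)}Qz^2(s)\,ds+D\int_{-D}^0\int_{t+\theta}^t e^{-2\delta(t-s)}R\dot z^2(s)\,ds\,d\theta .
\]
The first step is to write the dominant term as $kq'(\theta^*+z(t-D))=kq'(\theta^*+z(t))+k[q'(\theta^*+z(t-D))-q'(\theta^*+z(t))]$. The first piece is handled by Assumption 1: $2Pz\,kq'(\theta^*+z)\le-2k\mu(\sigma)Pz^2$. The difference term $\eta(t)$ satisfies $|\eta|\le L_1|z(t)-z(t-D)|$ by Assumption 2. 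I would use it with the S-procedure through $\lambda_1(L_1^2 z^2(t-D)-\eta^2)\ge0$, written via the variable $v=z(t)-z(t-D)$. Jensen's inequality then gives $D\int_{t-D}^t R\dot z^2\ge \frac{e^{-2\delta D}}{1}R\,v^2$.

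For the disturbances $\Phi$, $\frac{1}{a}Y_1$, $Y_2$, $\mathcal R$, $\mathcal W$, I would add the standard S-procedure terms $\lambda_2\big(\frac{1}{\varepsilon}(\ldots)^2-\frac{1}{\varepsilon}\Phi^2\big)$, and similarly for the others, using the bounds \eqref{1tvr_boundsGY}--\eqref{1tvr_boundphi}. The scalings by $\varepsilon^*$ and $\varepsilon^*+2\bar d^*$ are arranged so that their constant parts sum exactly to $\varpi$. Substituting $\dot z$ into $D^2R\dot z^2$ and applying Schur's complement yields $\dot V+2\delta V\le\xi^{\rm T}\Omega_0\xi+\varpi<\varpi$, with $\xi=[z(t),z(t-D),v,\eta,\Phi,Y_1,Y_2,\mathcal R,\mathcal W]$. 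Monotonicity in $\varepsilon\le\varepsilon^*$ and $\bar d\le\bar d^*$ makes the LMIs, which are checked only at the starred values, valid for all admissible parameters.

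\emph{Comparison and the main obstacle.} The comparison lemma gives $V(t)\le e^{-2\delta(t-t_0)}V(t_0)+\frac{\varpi}{2\delta}(1-e^{-2\delta(t-t_0)})$, where $t_0=\varepsilon+\bar h$. Here $V(t_0)\le(P+QD)\varkappa^2+RD^3\Delta_z^2$, using $|z|<\varkappa$ on the history and $|\dot z|<\Delta_z$. Then $|z|^2\le V/P\le V$ by $P\ge1$. Adding back $k|G|<\frac{k\Delta_\mathcal F}{2}\varepsilon$ gives the bound \eqref{1tvr_theorem1boundtheta} and the attracting set \eqref{ubscalar}, and $\Omega_1,\Omega_2$ ensure the bound stays $<\sigma$.

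The delicate step is the circularity: all the disturbance bounds presuppose \eqref{1tvr_overallBound}. I would close it by contradiction. Let $t^*$ be the first time $|\tilde\theta(t^*)|=\sigma$. Every estimate above is valid on $[0,t^*)$, so it yields $|\tilde\theta(t^*)|<\sigma$, which is impossible.

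\emph{Feasibility.} For $\varepsilon^*,\bar d^*,\nu^*,a,k\to0$, with $\varepsilon^*+2\bar d^*\ll a^3$ and $\nu^*\ll a^2$, the diagonal $\Psi_{11}<0$ is dominated by $-2k\mu P$ when $\delta$ and $Q$ are small. Choosing $\lambda_i$ appropriately keeps the off-diagonal couplings dominated, so $\Omega_0<0$ by Schur's complement, while $\varpi\to0$ and $\varkappa\to\sigma_0<\sigma$ give $\Omega_1$ and $\Omega_2$ (with $D$ fixed, possibly shrinking $Q$ and $R$ or rescaling $k$ as in Remark 1). I expect this feasibility argument for fixed large $D$, where the terms $e^{-2\delta D}$ and $D^2R$ compete, to be the most technical part.
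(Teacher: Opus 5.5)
Your outline follows the paper's proof (Appendix, proof of Theorem 2) step for step. It uses the same functional: your double integral is the paper's $V_R=RD\int_{t-D}^t e^{2\delta(s-t)}(s-t+D)\dot z^2(s)\,ds$. It also uses the same splitting of $kq'(\theta^*+z(t-D))$, Jensen on $v=\int_{t-D}^t\dot z(s)\,ds$, comparison from $t_0=\varepsilon+\bar h$ with $V(t_0)<(P+QD)\varkappa^2+RD^3\Delta_z^2$, and a first-exit contradiction. However, the $\lambda_1$ step fails as written.

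If $\eta=q'(\theta^*+z(t-D))-q'(\theta^*+z(t))$, then $L_1^2z^2(t-D)-\eta^2\ge0$ is unjustified, because the difference is controlled by $|v|$, not by $|z(t-D)|$. The valid constraint $\lambda_1(L_1^2v^2-\eta^2)\ge0$ would put $\lambda_1L_1^2$ in the $(3,3)$ entry. Also, $\dot z$ could not then be written in your $\xi$ without carrying $q'(\theta^*+z(t))$ as well, so you would not reach the stated $\Omega_0$.

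The paper keeps the two roles apart:
\begin{itemize}
\item The difference is bounded directly, $2kPz(t)\eta\le 2kL_1P|z(t)||v|$. This is $\Psi_{13}$ and needs no multiplier.
\item The fourth coordinate is $g(t)=q'(\theta^*+z(t-D))$ itself. It enters $\dot z=kg+k\Phi-\frac{2k}{a}Y_1-2kY_2+ka\mathcal R+k\mathcal W$, hence $RD^2\dot z^2$, giving $\Xi_4=kDR$. It satisfies $|g|<L_1|z(t-D)|$ because $q'(\theta^*)=0$, giving $\lambda_1L_1^2$ in $\Psi_{22}$ and $\Psi_{44}=-\lambda_1$.
\end{itemize}
The quadratic form must also be taken in the vector of absolute values $\mathrm{col}\{|z(t)|,|z(t-D)|,|v|,|g|,|\Phi|,|Y_1|,|Y_2|,|\mathcal R|,|\mathcal W|\}$, as in the paper. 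The reason is that $|z||v|$ is not a quadratic form in signed variables; this choice is also what makes all couplings in $\Psi$ and $\Xi$ nonnegative as stated.

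Two smaller points. First, when $D>\varepsilon$ the window $[t_0-D,t_0]=[\varepsilon+\bar d,\varepsilon+\bar h]$ reaches into $[0,\bar h)$, where $G$ involves $\tilde\theta$ at negative times. The paper supplements $|\tilde\theta|\le\sigma_0$ on $[-\bar h,0]$ for this, while you only bound $z$ on $[\bar h,\varepsilon+\bar h]$.

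Second, on feasibility, the competition you anticipate is real; in fact the paper's reduced $4\times4$ inequality omits the $\Xi$ column. After the Schur complement the $(4,4)$ entry is $-\lambda_1+k^2D^2R$, and the core decouples into the blocks $\{1,3\}$, $\{2\}$, $\{4\}$. So one needs $k^2D^2R<\lambda_1<Qe^{-2\delta D}/L_1^2$ together with $Re^{-2\delta D}\big((2k\mu(\sigma)-2\delta)P-Q\big)>k^2L_1^2P^2$. These are compatible if and only if $k^4D^2L_1^4P^2e^{4\delta D}<Q\big((2k\mu(\sigma)-2\delta)P-Q\big)$. With $P=1$, $\delta=k\mu(\sigma)/2$ and $Q=k\mu(\sigma)/2$, this reads $kDL_1^2e^{k\mu(\sigma)D}<\mu(\sigma)/2$. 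Then $R$ of order $k$ keeps $QD$ and $RD^3\Delta_z^2$ small in $\Omega_2$. So for fixed $D$ it is small $kD$ (the scaling of Remark 1) that settles $\Omega_0$ and $\Omega_2$. After that, $a\to0$, $\varepsilon^*+2\bar d^*\ll a^3$ and $\nu^*\ll a$ handle the remaining entries and $\Omega_1$.
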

\textit{Proof}: The proof of Theorem 1 follows the same argument of the proof of Theorem 2 which is given in Appendix.

Note that the exponential decay rate of $\left|\tilde{\theta}(t)\right|$ in \eqref{1tvr_theorem1boundtheta} is $\delta$, which is governed by the adaptation gain $k$ such that $0<\delta<k\mu(\sigma)$ in $\Omega_0$ of \eqref{1tvr_lmi1}.

\section{Extremum Seeking of Vector Case}
In this section, {building upon the scalar analysis framework, we extend the methodology to vector case}, which is non-trivial. 
We consider the ES for nonlinear static maps
\begin{equation}\label{2tvr_map}
	y(t)=q\big(\theta(t)\big),
\end{equation}
where $y(t)\in \mathbb{R}$ is the measurable output, {$\theta(t)=[\theta_1(t), \cdots,  \theta_n(t)]^{\rm T}\in \mathbb{R}^n$} is the input vector. As shown in Fig. {\ref{figtvr_vector}}, {$\nu(t)\in \mathbb{R}$ represents the uncertainty in measurement bias and satisfies the condition given in \eqref{bound_nu}}. The ES scheme is subject to transmission delays in both input and output channels, which have been defined by \eqref{tvrdelay}-\eqref{tvrdelayadded}. The map (\ref{2tvr_map}) satisfies the following assumptions.
\begin{figure}[pos = H]
	\centering
	\includegraphics[scale=0.32]{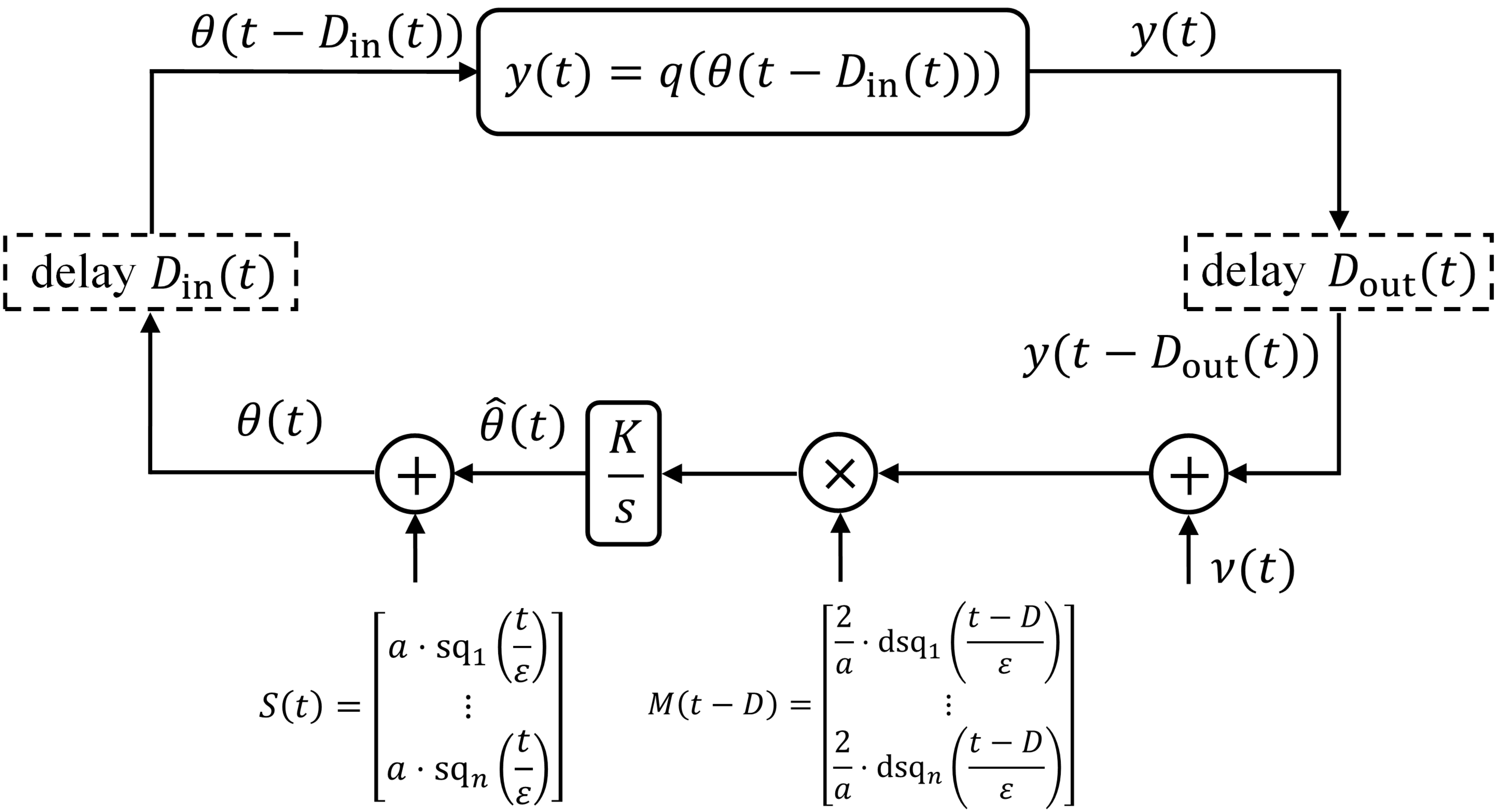}
	\caption{{ES with {a disturbance and} time delays for static maps: vector case.}}
	\label{figtvr_vector}
\end{figure}

\begin{assump} \cite{gaofeng2023}: There exist $\theta^*=[\theta^*_1, \cdots, \theta^*_n]^{\rm T}\in \mathbb{R}^n$, constants $\sigma>0$ and small $a>0$ such that $q(\theta)=q(\theta_1,\cdots, \theta_n)\in C^2 \left[(\theta^*_1-\sigma-a, \theta^*_1+\sigma+a){\small\times\cdots \times}(\theta^*_n-\sigma-a, \theta^*_n+\sigma+a)\right]$, and the following relations hold:
	\begin{equation}\label{2tvr_defineH}
				\begin{array}{l}
					\frac{\partial q}{\partial \theta}(\theta^*)=\left[\frac{\partial q}{\partial \theta_1}(\theta^*) \  \cdots \ \frac{\partial q}{\partial \theta_n}(\theta^*)\right]=0,\ \\ [5pt]
					\frac{\partial^2 q}{\partial \theta^2}(\theta^*)= \begin{bmatrix}
						\frac{\partial^2 q}{\partial \theta^2_1}(\theta^*)&\cdots &\frac{\partial^2 q}{\partial \theta_1 \partial \theta_n }(\theta^*)\\
						\vdots   &\ddots & \vdots\\
						\frac{\partial^2 q}{\partial \theta_n \partial \theta_1 }(\theta^*)&\cdots&\frac{\partial^2 q}{\partial \theta^2_n}(\theta^*)
					\end{bmatrix}
					=H<0,
			\end{array}
	\end{equation}
	\begin{equation}\label{2tvr_boundsAssumption}
		\begin{array}{l}
			\frac{\partial q}{\partial \theta}(\theta^*+\Delta)\cdot \Delta \leq -\mu(\sigma)\cdot |\Delta|^2<0,\ \
			\Delta =\left[\Delta_1, \cdots, \Delta_n \right]^{\rm T},  \  \forall \ 0<|\Delta_i|<\sigma,\ \  i=1,  \cdots, n,
		\end{array}
\end{equation}
where $\mu(\sigma)>0$ is a known $\sigma$--dependent constant, which decreases monotonically in $\sigma$. 
\end{assump}

\begin{assump} \cite{gaofeng2023}: For any $\Delta =\left[\Delta_1, \cdots, \Delta_n \right]^{\rm T}$ with $0<|\Delta_i|<\sigma,\   i=1, \cdots, n$, and $a$ defined in Assumption 3, given $\xi=\left[\xi_1, \cdots, \xi_n \right]^{\rm T}$ with $\xi_i \in[-1, 1], \ i=1, \cdots, n$, we have
\begin{equation}\label{2tvr_AssumptionBound2}
	\begin{array}{l}
		|q(\theta^*+\Delta)|<q_0(\sigma),\ \ \big|\frac{\partial q}{\partial \theta}(\theta^*+\Delta)\big|<q_1(\sigma),\ \
		\big|\frac{\partial^2 q}{\partial \theta^2}(\theta^*+\Delta)\big|<q_2(\sigma),\\  \big|\frac{\partial q}{\partial \theta}(\theta^*+\Delta)-\frac{\partial q}{\partial \theta}(\theta^*)\big|<L_1|\Delta|,\ \
		\big|\frac{\partial^2 q}{\partial \theta^2}(\theta^*+\Delta+a\xi)-\frac{\partial^2 q}{\partial \theta^2}(\theta^*)\big|<L_2|\Delta+a\xi|,
	\end{array}
\end{equation}
in which $q_0(\sigma), q_1(\sigma), q_2(\sigma), L_1$, and $L_2$ are positive constants. 
\end{assump}

The gradient-based ES regulator is given by
\begin{equation}\label{2tvr_controller}
	\begin{array}{l}
		{\theta}(t)=\hat{\theta}(t)+S\left(t\right),\\
		\dot{\hat{\theta}}(t)=K\cdot M(t-D)\cdot \left[y\left(t-D_{\rm{out}}(t)\right){{+\nu(t)}}\right]\\ \ \ \ \ \ \ \   =K\cdot M(t-D)\cdot \left[q\big(\theta\left(t-h(t)\right)\big){{+\nu(t)}}\right],\ \ \ \  t\ge\bar{h},
	\end{array}
\end{equation}
where $\hat{\theta}(t)$ is the real-time estimate of $\theta^*$ with the estimation error being
\begin{equation}\label{2tvr_define_esterror}
	\tilde{\theta}(t)\triangleq\hat{\theta}(t)-\theta^*.
\end{equation}
The initial value satisfies $\left|\hat{\theta}(t)-\theta^*\right|\le\sigma_0<\sigma$ for $t\in\left[0,\bar{h}\right]$ where $\sigma_0>0$ is a known constant. {The matrix $K=kI$ with $k>0$ being the adaptation gain whose sign is opposite to the sign of the Hessian matrix $H$ in (\ref{2tvr_defineH}) and $I$ being the identity matrix.} The dither signals are defined as
{\begin{equation*}
		\begin{array}{l}
			S(t)= a\left[
			{\rm{sq}}_1\left(\frac{t}{\varepsilon}\right), \cdots, {\rm{sq}}_n\left(\frac{t}{\varepsilon}\right)
			\right]^{\rm T},\\
			M(t-D)= \frac{2}{a}\left[
			{\rm{dsq}}_1\left(\frac{t-D}{\varepsilon}\right), \cdots, {\rm{dsq}}_n\left(\frac{t-D}{\varepsilon}\right)
			\right]^{\rm T},
		\end{array}
\end{equation*}}
with the dither amplitude $a>0$ and the dither period $\varepsilon>0$. {The signals ${\rm{sq}}_{\ell}\left(\frac{t}{\varepsilon}\right)$ and ${\rm{dsq}}_{\ell}\left(\frac{t}{\varepsilon}\right)$ are designed, for $\ell=1, \cdots, n$, as follows:
	\begin{equation*}
				\begin{aligned}
					\mathrm{sq}_{\ell}\left(\frac{t}{\varepsilon}\right)=
					\begin{cases}
						1, 
						& t \in 
						\displaystyle 
						\ \bigcup_{m=0}^{2^{\ell-1}-1}
						\varepsilon\!\left[
						i+\frac{m\cdot 2}{2^{\ell}},\ 
						i+\frac{1+m\cdot 2}{2^{\ell}}
						\right),\\[1.0em]
						-1, 
						& t \in 
						\displaystyle 
						\ \bigcup_{m=0}^{2^{\ell-1}-1}
						\varepsilon\!\left[
						i+\frac{1+m\cdot 2}{2^{\ell}},\ 
						i+\frac{2+m\cdot 2}{2^{\ell}}
						\right),
					\end{cases},i\in \mathbb{N},
			\end{aligned}
	\end{equation*}
	\begin{equation*}
				\begin{aligned}
					\mathrm{dsq}_{\ell}\left(\frac{t}{\varepsilon}\right)=
					\begin{cases}
						1, 
						& t \in 
						\displaystyle\bigcup_{j=0}^{2^{n-\ell}-1} 
						\ \bigcup_{m=0}^{2^{\ell-1}-1}
						\varepsilon\!\left[
						i+\frac{1+2j+m\cdot 2^{n-\ell+2}}{2^{n+1}},\ 
						i+\frac{2+2j+m\cdot 2^{n-\ell+2}}{2^{n+1}}
						\right),\\[1.0em]
						-1, 
						& t \in 
						\displaystyle\bigcup_{j=0}^{2^{n-\ell}-1} 
						\ \bigcup_{m=0}^{2^{\ell-1}-1}
						\varepsilon\!\left[
						i+\frac{1+2j+m\cdot 2^{n-\ell+2}}{2^{n+1}}+\frac{1}{2^{\ell}},\ 
						i+\frac{2+2j+m\cdot 2^{n-\ell+2}}{2^{n+1}}+\frac{1}{2^{\ell}}
						\right),\\[1.0em]
						0, & \text{others},  
					\end{cases}, i\in \mathbb{N}.
			\end{aligned}
\end{equation*}}
\begin{figure}[pos=H]
	\begin{center}
		\includegraphics[scale=0.32]{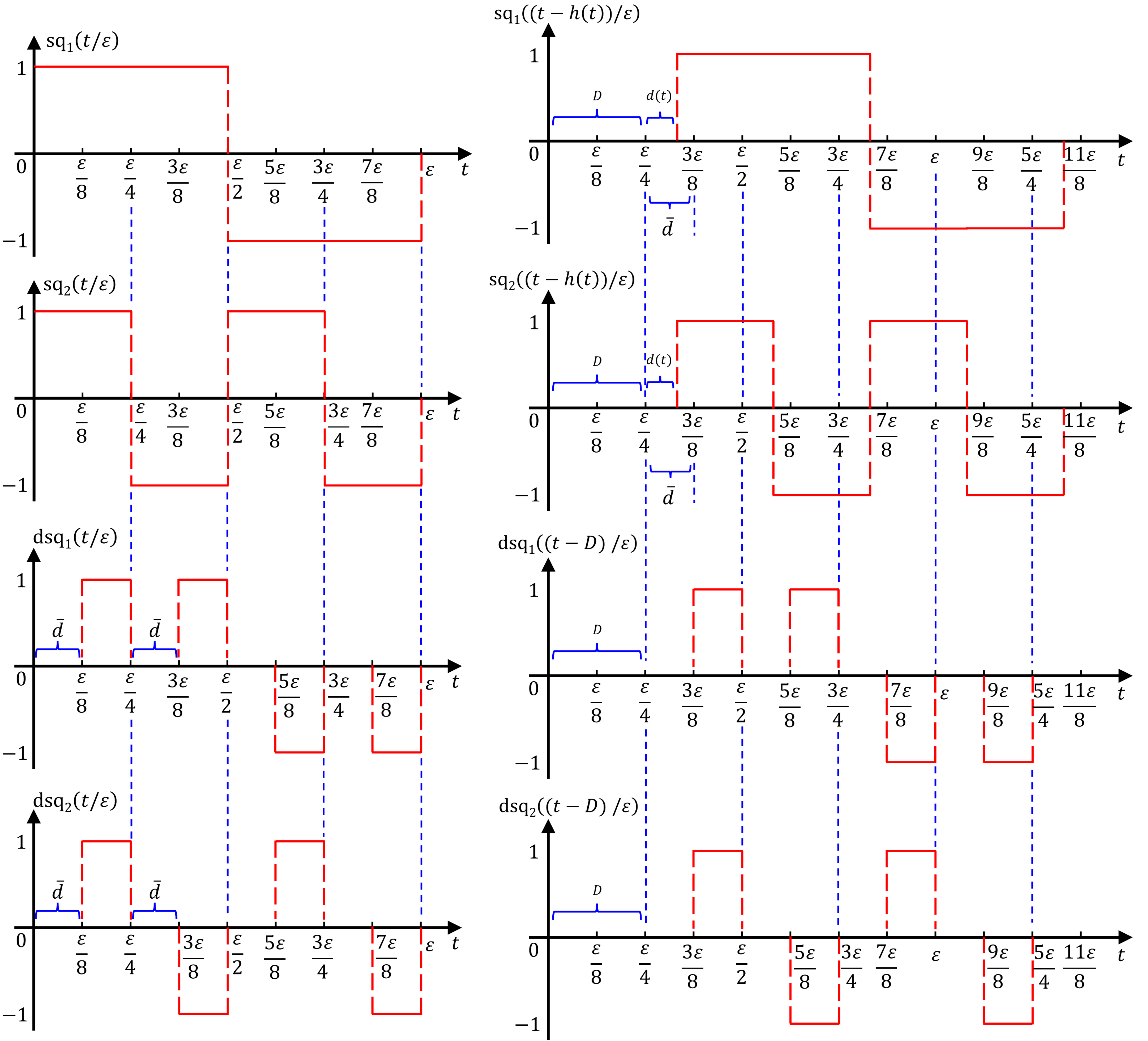}
		\DeclareGraphicsExtensions.
		\caption{The dithers and the delayed dithers under $n=2$, $D=\frac{\varepsilon}{4}$, and $\bar{d}\leq\frac{\varepsilon}{8}$.}
		\label{2tvr_dither_and_delayed}
	\end{center}
\end{figure}

{The case $n=2$ is shown in Fig. \ref{2tvr_dither_and_delayed}. In parallel with the scalar case (see the illustration underneath \eqref{1tvr_define_dithersignal2}), the phase-shifted dither $M(t-D)$ is employed to address the constant delay $D$, and the signals ${\rm{dsq}}_{\ell}\left(\frac{t}{\varepsilon}\right), \ell=1, \cdots, n$, are carefully designed to deal with the unknown time-varying delay $d(t)$ in (\ref{tvrdelay})-\eqref{tvrdelayub}. When $\bar{d}$ is estimated and known, we are allowed to select $\varepsilon>2^n\bar{d}$ to establish \eqref{2averaging} so that the delay-free (when $h(t)\equiv0$) averaged system \eqref{2averaged} is stable. For computation simplicity, in Section III of this paper, we consider the case of $\varepsilon\geq2^{n+1}\bar{d}$.}

Under \eqref{2tvr_controller}--\eqref{2tvr_define_esterror}, utilizing the Taylor series of $q\left(\cdot\right)$, the dynamics of the estimation error is govern by
\begin{equation}\label{dee}
	\begin{array}{l}
		\dot{\tilde{\theta}}(t)=\dot{\hat{\theta}}(t)=KM(t-D) \left[q\big(\theta\left(t-h(t)\right)\big){{+\nu(t)}}\right]\\
		=KM(t-D)\left[q\big(\theta^{*}+\tilde{\theta}(t-h(t))+S(t-h(t))\big){{+\nu(t)}}\right]\\
		=K M(t-D)\Big[q\left(\theta^{*}+\tilde{\theta}\left(t-h(t)\right)\right) +\frac{\partial q}{\partial \theta}\left(\theta^{*}+\tilde{\theta}\left(t-h(t)\right)\right)S\left(t-h(t)\right)\\
		\ \ \ \ +\frac{1}{2}S^{\rm T}\left(t-h(t)\right)H(t)S\left(t-h(t)\right) {{+\nu(t)}}\Big],\ \ \ \ t\ge\bar{h},
	\end{array}
\end{equation}
where $H(t)\triangleq\frac{\partial ^2 q}{\partial \theta ^2}\big(\theta^{*}+\tilde{\theta}(t-h(t))+\zeta S(t-h(t))\big)$ with $\zeta\in(0,1)$. Denoting the notation
{\begin{equation*}
		\begin{array}{l}
			S(t-h(t))=aS_0(t-h(t)), \\
			S_0(t-h(t))\triangleq \left[
			{\rm{sq}}_1\big(\frac{t-h(t)}{\varepsilon}\big), \cdots,  {\rm{sq}}_n\big(\frac{t-h(t)}{\varepsilon}\big)
			\right]^{\rm{T}}, \\
			M(t-D)=\frac{2}{a}M_0(t-D), \\
			M_0(t-D)\triangleq \left[
			{\rm{dsq}}_1\big(\frac{t-D}{\varepsilon}\big), \cdots,  {\rm{dsq}}_n\big(\frac{t-D}{\varepsilon}\big)
			\right]^{\rm{T}},
		\end{array}
\end{equation*}}
the system \eqref{dee} is rewritten as
\begin{equation}\label{2tvr_errorSystem0}
	\begin{array}{l}
		\dot{\tilde{\theta}}(t)
		=\frac{2k}{a}M_0(t-D)q\left(\theta^{*}+\tilde{\theta}(t-h(t))\right)\\
		\ \ \ \ \ \ \ \ \ \ +2kM_0(t-D)\frac{\partial q}{\partial \theta}\left(\theta^{*}+\tilde{\theta}(t-h(t))\right)S_0(t-h(t))\\
		\  \ \ \ \ \ \ \ \ \ +kaM_0(t-D)S_0^{\rm T}(t-h(t))H(t)S_0(t-h(t)) {{+\frac{2k}{a}\nu(t)M_0(t-D)}}.\\
	\end{array}
\end{equation}
Making use of $\frac{\partial q}{\partial\theta}\left(\theta^{*}+\tilde{\theta}(t-h(t))\right)S_0(t-h(t))=S^{\rm T}_0(t-h(t))\frac{\partial q}{\partial \theta}^{\rm T}\left(\theta^{*}+\tilde{\theta}(t-h(t))\right)$, the equation \eqref{2tvr_errorSystem0} can be further expressed in the following form
\begin{equation}\label{2tvr_errorSystemFR}
	\dot{\tilde{\theta}}(t)
	=k\mathcal{F}(t)+ka\mathcal{R}(t) {+k\mathcal{W}(t)},\ \ t\ge\bar{h},
\end{equation}
where
\begin{equation}\label{2tvr_DefineFR}
	\begin{array}{l}
		\mathcal{F}(t)\triangleq\frac{2}{a}M_0(t-D)q\left(\theta^{*}+\tilde{\theta}\left(t-h(t)\right)\right)\\
		\ \ \ \ \ \ \ \ \ \ \ \ +2M_0(t-D)S^{\rm T}_0\left(t-h(t)\right)\frac{\partial q}{\partial \theta}^{\rm T}\left(\theta^{*}+\tilde{\theta}(t-h(t))\right)\\
		\ \ \ \ \ \ \ \ \ \ \ \ +aM_0(t-D)S_0^{\rm T}\left(t-h(t)\right)HS_0\left(t-h(t)\right),\\
		\mathcal{R}(t)\triangleq M_0(t-D)S_0^{\rm T}\left(t-h(t)\right)(H(t)-H)S_0\left(t-h(t)\right),\\
		{{\mathcal{W}(t)\triangleq\frac{2}{a}\nu(t)M_0(t-D)}}.
	\end{array}
\end{equation}
Similar to the scalar case \eqref{1averagedsystem}-\eqref{1tvr_integerdsq}, regarding slowly time-varying $q\left(\theta^{*}+\tilde{\theta}\left(t-h(t)\right)\right)$ and $\frac{\partial q}{\partial \theta}^{\rm T}\left(\theta^{*}+\tilde{\theta}(t-h(t))\right)$ as freezing constants, and neglecting $ka\mathcal{R}(t)$, the averaged-like system of \eqref{2tvr_errorSystemFR} is calculated as
\begin{equation}\label{2averaged}
	\begin{array}{l}
		\dot{\tilde{\theta}}_{av}(t)=\frac{2k}{a\varepsilon}\int_{t-\varepsilon}^tM_0(\tau-D)d\tau\cdot q\left(\theta^{*}+\tilde{\theta}_{av}\left(t-h(t)\right)\right)\\
		\ \ \ \ \ \ \ \ \ \ \ \ \ +\frac{2k}{\varepsilon}\int_{t-\varepsilon}^tM_0(\tau-D)S^{\rm T}_0\left(\tau-h(\tau)\right)d\tau\frac{\partial q}{\partial \theta}^{\rm T}\left(\theta^{*}+\tilde{\theta}_{av}(t-h(t))\right)\\
		\ \ \ \ \ \ \ \ \ \ \ \ \ +\frac{ak}{\varepsilon}\int_{t-\varepsilon}^tM_0(\tau-D)S^{\rm T}_0\left(\tau-h(\tau)\right)HS_0\left(\tau-h(\tau)\right)d\tau {+k\mathcal{W}(t)}\\ \ \ \ \ \ \ \ \ \ \ 
		=k\frac{\partial q}{\partial \theta}^{\rm T}\left(\theta^{*}+\tilde{\theta}_{av}(t-h(t))\right) {+k\mathcal{W}(t)},
	\end{array}
\end{equation}
where we utilize
{\begin{equation}\label{2averaging}
				\begin{array}{l}
					\frac{1}{\varepsilon}\int_{t-\varepsilon}^tM_0(\tau-D)d\tau=\frac{1}{\varepsilon}\int_{t-\varepsilon}^t\left[\begin{smallmatrix}	{\rm{dsq}}_1\left(\frac{\tau-D}{\varepsilon}\right)\\ \vdots \\ {\rm{dsq}}_n\left(\frac{\tau-D}{\varepsilon}\right)\end{smallmatrix}\right]d\tau=0,\\
					\frac{1}{\varepsilon}\int_{t-\varepsilon}^tM_0(\tau-D)S^{\rm T}_0\left(\tau-h(\tau)\right)d\tau\\
					=\frac{1}{\varepsilon}\int_{t-\varepsilon}^t\left[\begin{smallmatrix}	{\rm{dsq}_1}\left(\frac{\tau-D}{\varepsilon}\right){\rm{sq}}_1\left(\frac{\tau-h(\tau)}{\varepsilon}\right)&\cdots&{\rm{dsq}_1}\left(\frac{\tau-D}{\varepsilon}\right){\rm{sq}}_n\left(\frac{\tau-h(\tau)}{\varepsilon}\right)\\
						\vdots &\ddots	&\vdots \\
						{\rm{dsq}}_n\left(\frac{\tau-D}{\varepsilon}\right){\rm{sq}}_1\left(\frac{\tau-h(\tau)}{\varepsilon}\right)&\cdots&{\rm{dsq}}_n\left(\frac{\tau-D}{\varepsilon}\right){\rm{sq}}_n\left(\frac{\tau-h(\tau)}{\varepsilon}\right)\end{smallmatrix}\right] d\tau=\frac{1}{2}I,\\
					\frac{1}{\varepsilon}\int_{t-\varepsilon}^tM_0(\tau-D)S^{\rm T}_0\left(\tau-h(\tau)\right)HS_0\left(\tau-h(\tau)\right)d\tau=0.
				\end{array}
\end{equation}}

Subsequently, we employ the time-delay approach to analyze the ES system described in \eqref{2tvr_errorSystemFR}
\begin{equation}\label{2tvr_integralF}
	\begin{array}{l}
		\frac{1}{\varepsilon}\int_{t-\varepsilon}^{t}\mathcal{F}(\tau){\rm{d}}\tau\\
		=\frac{2}{a\varepsilon}
		\int_{t-\varepsilon}^{t}M_0(\tau-D)q\big(\theta^{*}+\tilde{\theta}(\tau-h(\tau))\big){\rm{d}}\tau\\
		\ \ \ +\frac{2}{\varepsilon}\int_{t-\varepsilon}^{t}M_0(\tau-D)S^{\rm T}_0(\tau-h(\tau)) \frac{\partial q}{\partial \theta}^{\rm T}\left(\theta^{*}+\tilde{\theta}(\tau-h(\tau))\right){\rm{d}}\tau, \ \ t\ge\varepsilon+\bar{h},
	\end{array}
\end{equation}
in which the 3rd formula of \eqref{2averaging} is employed.
Considering the zero equality $\frac{1}{\varepsilon}\int_{t-\varepsilon}^{t}M_0(\tau-D)q\big(\theta^{*}+\tilde{\theta}(t-D)\big){\rm{d}}\tau=0$, for the 1st term on the right-hand side of (\ref{2tvr_integralF}) , we arrive at
\begin{equation}\label{2tvr_integralF_1}
	\begin{array}{l}
		\frac{2}{a\varepsilon}
		\int_{t-\varepsilon}^{t}M_0(\tau-D)q\big(\theta^{*}+\tilde{\theta}(\tau-h(\tau))\big){\rm{d}}\tau\\
		=-\frac{2}{a\varepsilon}
		\int_{t-\varepsilon}^{t}M_0(\tau-D)\left[q\big(\theta^{*}+\tilde{\theta}(t-D)\big)-q\big(\theta^{*}+\tilde{\theta}(\tau-h(\tau))\big)\right]{\rm{d}}\tau\\
		=-\frac{2}{a\varepsilon}\int_{t-\varepsilon}^{t}M_0(\tau-D)\int_{\tau-h(\tau)}^{t-D}\frac{\partial q}{\partial \theta}\left(\theta^{*}+\tilde{\theta}(s)\right)\dot{\tilde{\theta}}(s){\rm{d}}s{\rm{d}}\tau.
	\end{array}
\end{equation}
For the 2nd term on the right-hand side of (\ref{2tvr_integralF}) , we have
\begin{equation}\label{2tvr_integralF_2}
	\begin{array}{l}
		\frac{2}{\varepsilon}\int_{t-\varepsilon}^{t}M_0(\tau-D)S^{\rm T}_0(\tau-h(\tau))\frac{\partial q}{\partial \theta}^{\rm T}\Big(\theta^{*}+\tilde{\theta}(\tau-h(\tau))\Big){\rm{d}}\tau\\
		=\frac{2}{\varepsilon}\int_{t-\varepsilon}^{t}M_0(\tau-D)S^{\rm T}_0(\tau-h(\tau))\\
		\ \ \ \times\left[\frac{\partial q}{\partial \theta}^{\rm T}\left(\theta^{*}+\tilde{\theta}(\tau-h(\tau))\right) - \frac{\partial q}{\partial \theta}^{\rm T}\left(\theta^{*}+\tilde{\theta}(t-D)\right)+\frac{\partial q}{\partial \theta}^{\rm T}\left(\theta^{*}+\tilde{\theta}(t-D)\right)\right]{\rm{d}}\tau\\
		=\frac{2}{\varepsilon}\int_{t-\varepsilon}^{t}M_0(\tau-D)S^{\rm T}_0(\tau-h(\tau)){\rm{d}}\tau\cdot \frac{\partial q}{\partial \theta}^{\rm T}\left(\theta^{*}+\tilde{\theta}(t-D)\right)\\ \ \ \
		-\frac{2}{\varepsilon}\int_{t-\varepsilon}^{t}M_0(\tau-D)S^{\rm T}_0(\tau-h(\tau))
		\left[\frac{\partial q}{\partial \theta}^{\rm T}\left(\theta^{*}+\tilde{\theta}(t-D)\right)-\frac{\partial q}{\partial \theta}^{\rm T}\left(\theta^{*}+\tilde{\theta}(\tau-h(\tau))\right)\right]{\rm{d}}\tau\\
		=\frac{\partial q}{\partial \theta}^{\rm T}\left(\theta^{*}+\tilde{\theta}(t-D)\right) -\frac{2}{\varepsilon}\int_{t-\varepsilon}^{t}M_0(\tau-D)S^{\rm T}_0(\tau-h(\tau))\int_{\tau-h(\tau)}^{t-D}\frac{\partial^2 q}{\partial \theta^2}\left(\theta^{*}+\tilde{\theta}(s)\right)\dot{\tilde{\theta}}(s){\rm{d}}s{\rm{d}}\tau,
	\end{array}
\end{equation}
where we employed $\frac{2}{\varepsilon}\int_{t-\varepsilon}^{t}M_0(\tau-D)S^{\rm T}_0(\tau-h(\tau)){\rm{d}}\tau=I$, $I$ is an identity matrix. We define
\begin{equation}\label{2tvr_DefineG}
	\begin{array}{l}
		G(t)=\frac{1}{\varepsilon}
		\int_{t-\varepsilon}^{t}(\tau-t+\varepsilon)\mathcal{F}(\tau){\rm{d}}\tau,
	\end{array}
\end{equation}
and then the following relation is satisfied
\begin{equation}\label{2tvr_systemFF}
	\begin{array}{l}
		\frac{\rm{d}}{{\rm{d}}t}\left[{\tilde{\theta}}(t)-kG(t)\right]=\dot{\tilde{\theta}}(t)-k\mathcal{F}(t)+\frac{k}{\varepsilon}
		\int_{t-\varepsilon}^{t}\mathcal{F}(\tau){\rm{d}}\tau.
	\end{array}
\end{equation}

Combining \eqref{2tvr_errorSystemFR}, \eqref{2tvr_integralF}--\eqref{2tvr_integralF_2}, and \eqref{2tvr_systemFF} together, we present the closed-loop error system as follows:
\begin{equation}\label{2tvr_errorSystemThetaYR}
	\begin{array}{l}
		\frac{\rm{d}}{{\rm{d}}t}\left[{\tilde{\theta}}(t)-kG(t)\right]=k\frac{\partial q}{\partial \theta}^{\rm T}\left(\theta^{*}+\tilde{\theta}(t-D)\right)
		 -\frac{2k}{a}Y_1(t) -2kY_2(t)+ka\mathcal{R}(t) {+k\mathcal{W}(t)},\ \ \  t\geq \varepsilon+\bar{h},
	\end{array}
\end{equation}
where
\begin{equation}\label{2tvr_DefineY}
	\begin{array}{l}
		Y_1(t)\triangleq\frac{1}{\varepsilon}\int_{t-\varepsilon}^{t}\int_{\tau-h(\tau)}^{t-D}M_0(\tau-D)\frac{\partial q}{\partial \theta}\left(\theta^{*}+\tilde{\theta}(s)\right)\dot{\tilde{\theta}}(s){\rm{d}}s{\rm{d}}\tau,\\
		Y_2(t)\triangleq\frac{1}{\varepsilon}\int_{t-\varepsilon}^{t}\int_{\tau-h(\tau)}^{t-D}M_0(\tau-D)S^{\rm T}_0(\tau-h(\tau))\frac{\partial^2 q}{\partial \theta^2}\left(\theta^{*}+\tilde{\theta}(s)\right)\dot{\tilde{\theta}}(s){\rm{d}}s{\rm{d}}\tau,
	\end{array}
\end{equation}
with $\dot{\tilde{\theta}}(t)$ defined by (\ref{2tvr_errorSystemFR}). We further employ the following change of variable
\begin{equation}\label{2tvr_denoteZ}
	\begin{array}{l}
		z(t)=\tilde{\theta}(t)-kG(t),
	\end{array}
\end{equation}
then the closed-loop system (\ref{2tvr_errorSystemThetaYR}) is transformed into
\begin{equation}\label{2tvr_errorSystemZYR}
	\begin{array}{l}
		\dot z(t)=k\frac{\partial q}{\partial \theta}^{\rm T}\Big(\theta^{*}+z(t-D)+kG(t-D)\Big)
		 -\frac{2k}{a}Y_1(t)	-2kY_2(t)+ka\mathcal{R}(t) {+k\mathcal{W}(t)}\\ \ \ \ \ \ \ \  
		=k\frac{\partial q}{\partial \theta}^{\rm T}\Big(\theta^{*}+z(t-D)\Big)+k\Phi(t)
		-\frac{2k}{a}Y_1(t)  -2kY_2(t)+ka\mathcal{R}(t) {+k\mathcal{W}(t)},\ \  t\geq \varepsilon+\bar{h},
	\end{array}
\end{equation}
where
\begin{equation}
	\begin{array}{l}\label{2tvr_define_Phi}
		\Phi(t)\triangleq\frac{\partial q}{\partial \theta}^{\rm T}\Big(\theta^{*}+z(t-D)+kG(t-D)\Big) -\frac{\partial q}{\partial \theta}^{\rm T}\Big(\theta^{*}+z(t-D)\Big).
	\end{array}
\end{equation}

We postulate that
\begin{equation}\label{2tvr_overallBound}
	\begin{array}{l}
		\left|\tilde{\theta}(t)\right|<\sigma,\ \  t\geq 0,
	\end{array}
\end{equation}
which will be ensured by the conditions \eqref{2tvr_lmi1} in Theorem 2. By defining $\overline{S}_0=\sup_{t\geq 0}\left|S_0(t)\right|, \overline{M}_0=\sup_{t\geq 0}\left|M_0(t)\right|$, under the bounds (\ref{2tvr_AssumptionBound2}) in Assumption 4 and the overall bound (\ref{2tvr_overallBound}), the upper bounds on $\mathcal{F}(t), \mathcal{R}(t)$, and $\dot{\tilde{\theta}}(t)$ are obtained from (\ref{2tvr_DefineFR}) and (\ref{2tvr_errorSystemFR}) such that
\begin{equation}\label{2tvr_boundFR}
	\begin{array}{l}
		|\mathcal{F}(t)|\leq \frac{2}{a}\left|M_0(t-D)q\big(\theta^{*}+\tilde{\theta}(t-h(t))\big)\right|
		\\ \ \  \ \  \ \  \ \  \ \  \ \ \ +2\left|M_0(t-D)S^{\rm T}_0(t-h(t))\frac{\partial q}{\partial \theta}^{\rm T}\big(\theta^{*}+\tilde{\theta}(t-h(t))\big)\right|\\  \ \  \ \  \ \  \ \  \ \  \ \ \ +a\left|M_0(t-D)S_0^{\rm T}\left(t-h(t)\right)HS_0\left(t-h(t)\right)\right|
		\\  \ \  \ \  \ \  \ \  \ \  \ 
		<\frac{2}{a}\overline{M}_0q_0(\sigma)+2\overline{M}_0\overline{S}_0q_1(\sigma)+a\overline{M}_0\overline{S}_0^2q_2(\sigma)\triangleq \Delta_\mathcal{F},\\
		|\mathcal{R}(t)|=  \left|M_0(t-D)S_0^{\rm T}(t-h(t))(H(t)-H)S_0(t-h(t))\right|\\   \ \  \ \  \ \  \ \  \ \ \   
		<\overline{M}_0\overline{S}_0^2 L_2\left|\tilde{\theta}(t-h(t))+a\zeta S_0(t-h(t))\right|\\  \ \  \ \  \ \  \ \  \ \ \ 
		<\overline{M}_0\overline{S}_0^2 L_2\left(\sigma+a\overline{S}_0\right)\triangleq \Delta_\mathcal{R},\\
		{{|\mathcal{W}(t)|=\left|\frac{2}{a}\nu(t)M_0(t-D)\right|\leq \frac{2}{a}\nu^*\overline{M}_0}},\\
		\left|\dot{\tilde{\theta}}(t)\right|\leq
		\left|k\mathcal{F}(t)\right|+|ka\mathcal{R}(t)| {+|k\mathcal{W}(t)|}
		<k\big(\Delta_\mathcal{F}+a\Delta_\mathcal{R} {+\frac{2}{a}\nu^*\overline{M}_0}\big)\triangleq \Delta_\theta.
	\end{array}
\end{equation}
From (\ref{2tvr_DefineG}), (\ref{2tvr_DefineY}), and \eqref{2tvr_errorSystemZYR},  we derive the following results
\begin{equation}\label{2tvr_boundsGY}
	\begin{array}{l}
		|G(t)|= \frac{1}{\varepsilon}\left|\int_{t-\varepsilon}^{t}(\tau-t+\varepsilon)\mathcal{F}(\tau){\rm{d}}\tau\right|
		\leq \frac{1}{\varepsilon}\int_{t-\varepsilon}^{t}(\tau-t+\varepsilon){\rm{d}}\tau\cdot|\mathcal{F}(\tau)|<\frac{\Delta_\mathcal{F}}{2}\varepsilon,\\
		|Y_1(t)|= \frac{1}{\varepsilon}\left|\int_{t-\varepsilon}^{t}\int_{\tau-h(\tau)}^{t-D}M_0(\tau-D)\frac{\partial q}{\partial \theta}\big(\theta^{*}+\tilde{\theta}(s)\big)\dot{\tilde{\theta}}(s){\rm{d}}s{\rm{d}}\tau\right|\\  \ \  \ \  \ \  \ \  \ \ \
		\leq \frac{1}{\varepsilon}\int_{t-\varepsilon}^{t}\int_{\tau-h(\tau)}^{t-D}{\rm{d}}s{\rm{d}}\tau \cdot \overline{M}_0 \left|\frac{\partial q}{\partial \theta}\big(\theta^{*}+\tilde{\theta}(s)\big)\right| \cdot \left|\dot{\tilde{\theta}}(s)\right|\\  \ \  \ \  \ \  \ \  \ \ \
		< \frac{1}{2}{\overline{M}_0q_1(\sigma)\Delta_\theta}(\varepsilon+2\bar{d}),\\
		|Y_2(t)|= \frac{1}{\varepsilon}\left|\int_{t-\varepsilon}^{t}\int_{\tau-h(\tau)}^{t-D}M_0(\tau-D)S^{\rm T}_0(\tau-h(\tau))\frac{\partial^2 q}{\partial \theta^2}\big(\theta^{*}+\tilde{\theta}(s)\big)\dot{\tilde{\theta}}(s){\rm{d}}s{\rm{d}}\tau
		\right|\\  \ \  \ \  \ \  \ \  \ \ \
		\leq \frac{1}{\varepsilon}\int_{t-\varepsilon}^{t}\int_{\tau-h(\tau)}^{t-D}{\rm{d}}s{\rm{d}}\tau \cdot \overline{M}_0\overline{S}_0 \left|\frac{\partial^2 q}{\partial \theta^2}\big(\theta^{*}+\tilde{\theta}(s)\big)\right| \cdot \left|\dot{\tilde{\theta}}(s)\right|\\  \ \  \ \  \ \  \ \  \ \ \
		< \frac{1}{2}{\overline{M}_0\overline{S}_0q_2(\sigma)\Delta_\theta}(\varepsilon+2\bar{d}),\\
		\left|\dot{z}(t)\right|\leq \left|k\frac{\partial q}{\partial \theta}^{\rm T}\big(\theta^{*}+z(t-D)+kG(t-D)\big)\right|  +\left|\frac{2k}{a}Y_1(t)\right|+\left|2kY_2(t)\right|+\left|ka\mathcal{R}(t)\right| {+|k\mathcal{W}(t)|}\\ \ \ \ \ \  \ \ \ \  
		<k\big(q_1(\sigma)+\frac{1}{a}{\overline{M}_0q_1(\sigma)\Delta_\theta}(\varepsilon+2\bar{d}) 
		+{\overline{M}_0\overline{S}_0q_2(\sigma)\Delta_\theta}(\varepsilon+2\bar{d})+a\Delta_\mathcal{R} {+\frac{2}{a}\nu^*\overline{M}_0}\big)
		\triangleq \Delta_z.
	\end{array}
\end{equation}
Furthermore, by applying the differential mean value theorem to \eqref{2tvr_define_Phi}, and under the bounds (\ref{2tvr_AssumptionBound2}) and (\ref{2tvr_boundsGY}), we arrive at
\begin{equation}\label{2tvr_boundphi}
	\begin{array}{l}
		|\Phi(t)|\leq \left|\frac{\partial^2 q}{\partial \theta^2}\big(\theta^{*}+z(t-D)+\zeta kG(t-D)\big)\right|\cdot |kG(t-D)|
		<q_2(\sigma)\frac{k\Delta_\mathcal{F}}{2}\varepsilon,
	\end{array}
\end{equation}
where $\zeta\in(0,1)$.	

\begin{theorem}\label{theorem2} Under Assumptions 3-4 with a given $\sigma$, consider the closed-loop system consisting of the multi-variable map (\ref{2tvr_map}) and the ES controller (\ref{2tvr_controller}), as well as the initial condition $\left|{\tilde{\theta}}(t)\right|\leq \sigma_0<\sigma$ for $t\in\left[0,\bar{h}\right]$. Given {$D\geq 0$} and tuning parameters $\sigma_0, k, a, \delta, \bar{d}^*, \varepsilon^*>0$, and {$\nu^*\geq 0$}, let scalar decision variables $\lambda_1, \lambda_2, \lambda_3, \lambda_4, \lambda_5, {\gamma
>0}$, and $P\geq
1, Q>0, R>0$ satisfy the following LMIs:
\begin{equation}\label{2tvr_lmi1}
	\begin{array}{l}
		\Omega_0=\begin{bmatrix}
			\Psi    &\Xi\\
			*       &-R
		\end{bmatrix}< 0,\ \ \ \  \Omega_1=\frac{\varpi}{2\delta}<\big(\sigma-\frac{k\Delta_\mathcal{F}}{2}\varepsilon^*\big)^2,\\
		\Omega_2=(P+{Q}D)\varkappa^2+RD{{}^3}\Delta_z^2<\big(\sigma-\frac{k\Delta_\mathcal{F}}{2}\varepsilon^*\big)^2,\\
		
	\end{array}
\end{equation}
where $\Psi$ is the symmetric matrix composed of
\begin{equation}
	\begin{array}{l}\label{2tvr_Psi}
		\Psi_{11}=-\big(2k\mu(\sigma)-2\delta\big) P+Q, \ \
		\Psi_{13}=kL_1P, \\
		\Psi_{15}=kP, \ \ \ \ \ \Psi_{16}=\frac{2k}{a}P,\ \ \Psi_{17}=2kP,
		\\
		\Psi_{18}=kaP,\ \ \
		{\Psi_{19}=kP,}  \ \ \ \ \
		\Psi_{22}=-Qe^{-2\delta D}+\lambda_1 L_1^2,\\
		\Psi_{33}=-Re^{-2\delta D},\ \
		\Psi_{44}=-\lambda_1,\ \
		\Psi_{55}=-\frac{\lambda_2}{\varepsilon^*},\\
		\Psi_{66}=-\frac{\lambda_3}{\varepsilon^*+2\bar{d}^*},\ \ \
		\Psi_{77}=-\frac{\lambda_4}{\varepsilon^*+2\bar{d}^*},\ \ \
		\Psi_{88}=-\lambda_5a, \ \ \  {\Psi_{99}=-\gamma,}
	\end{array}
\end{equation}
with other terms being zero, and
\begin{equation}\label{2tvr_Xi}
	\begin{array}{l}
		\Xi=\left[0,0,0,k,k,\frac{2k}{a},2k,ka {,k}\right]^{\rm T}DR,\\
		\varkappa=\sigma_0+\Delta_\theta \varepsilon^*+\frac{k \Delta_\mathcal{F}}{2}\varepsilon^*,\ \
		\\
		\varpi=\frac{\lambda_2q_2^2(\sigma)k^2 \Delta_\mathcal{F}^2}{4}\varepsilon^*+\frac{\lambda_3\overline{M}_0^2 q_1^2(\sigma)\Delta_\theta^2}{4}(\varepsilon^*+2\bar{d}^*)
		 +\frac{\lambda_4 \overline{M}_0^2\overline{S}_0^2q_2^2(\sigma)\Delta_\theta^2}{4}(\varepsilon^*+2\bar{d}^*)+\lambda_5a \Delta_\mathcal{R}^2 {+\frac{4\gamma\overline{M}_0^2}{a^2}{\nu^*}^2}.
	\end{array}
\end{equation}
Then, for $\forall$ $\bar{d} \in [0, \bar{d}^*]$ and $\forall$ $\varepsilon \in (2^{n+1}\bar{d}, \varepsilon^*]$, the estimation error satisfies
\begin{equation}\label{2tvr_theorem1boundtheta}
	\begin{array}{l}
		\Big|{\tilde{\theta}}(t)\Big|< \left|{\tilde{\theta}}(\bar{h})\right|+\Delta_\theta (t-\bar{h})<\sigma, \ \
		t\in \left[\bar{h}, \varepsilon+\bar{h}\right],\\
		\left|{\tilde{\theta}}(t)\right|<  \left[\big((P+{Q}D)\varkappa^2+RD^3\Delta_z^2\big)e^{-2\delta(t-\varepsilon-\bar{h})}\right.
		 +\left.\big(1-e^{-2\delta(t-\varepsilon-\bar{h})}\big)\frac{\varpi}{2\delta}\right]^{\frac{1}{2}} +\frac{k\Delta_\mathcal{F}}{2}\varepsilon<\sigma,\ \ \ t\in \left[\varepsilon+\bar{h}, \infty \right),
	\end{array}
\end{equation}
and is exponentially attracted to the set
\begin{equation}
	\begin{array}{l}
		\Theta=\left\{\tilde{\theta} \in \mathbb{R}^n: \left|{\tilde{\theta}}\right|< \sqrt{\frac{\varpi}{2\delta}}+\frac{k\Delta_\mathcal{F}}{2}\varepsilon \right\},
	\end{array}
\end{equation}
which is adjustable via the tuning parameters $\varepsilon$ and $a$. Moreover, LMIs \eqref{2tvr_lmi1} are always feasible for small enough $\bar{d}^*,\varepsilon^*,{\nu^*}$, $a$, and $k$. 
\end{theorem}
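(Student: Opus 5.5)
The proof is a Lyapunov--Krasovskii argument for the perturbed delay ODE \eqref{2tvr_errorSystemZYR} in the variable $z=\tilde\theta-kG$. We combine it with a continuity (contradiction) argument that keeps the postulated bound \eqref{2tvr_overallBound} valid for all time. Since $|\tilde\theta-z|=k|G|<\frac{k\Delta_\mathcal{F}}{2}\varepsilon$, bounds on $z$ transfer directly to $\tilde\theta$.

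\textbf{Step 1 (initial interval).} On $[\bar h,\varepsilon+\bar h]$ the error system \eqref{2tvr_errorSystemFR} holds with $|\dot{\tilde\theta}|<\Delta_\theta$. Integration then gives the first line of \eqref{2tvr_theorem1boundtheta}. Consequently $|z(t)|<\sigma_0+\Delta_\theta\varepsilon^*+\frac{k\Delta_\mathcal{F}}{2}\varepsilon^*=\varkappa$ on the initial segment of length $\ge D$ preceding $\varepsilon+\bar h$, and $|\dot z|<\Delta_z$ there. Condition $\Omega_2$ guarantees that this stays below $\sigma$.

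\textbf{Step 2 (functional).} For $t\ge\varepsilon+\bar h$, write the system as
\[
\dot z(t)=k\tfrac{\partial q}{\partial\theta}^{\rm T}(\theta^*+z(t))-k\bigl[\tfrac{\partial q}{\partial\theta}^{\rm T}(\theta^*+z(t))-\tfrac{\partial q}{\partial\theta}^{\rm T}(\theta^*+z(t-D))\bigr]+\text{disturbances},
\]
where the bracket equals $\int_{t-D}^t\frac{\partial^2q}{\partial\theta^2}\dot z$. Take
\[
V=Pz^{\rm T}z+Q\int_{t-D}^te^{2\delta(s-t)}|z(s)|^2ds+RD\int_{-D}^0\int_{t+\theta}^te^{2\delta(s-t)}|\dot z(s)|^2ds\,d\theta.
\]
Differentiate $V+2\delta V$. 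Assumption 3 gives $2Pk\,z^{\rm T}\frac{\partial q}{\partial\theta}^{\rm T}(\theta^*+z)\le-2k\mu Pz^{\rm T}z$, which produces $\Psi_{11}$. The $L_1$-Lipschitz bound from Assumption 4 handles the difference term: introduce $\eta_3=\frac1D\int_{t-D}^t\dot z$ and apply Jensen's inequality, giving $-Re^{-2\delta D}$ and $\Psi_{13}=kL_1P$. Alternatively, express $\frac{\partial q}{\partial\theta}(z(t-D))$ through an S-procedure with $\lambda_1$ and the bound $|\cdot|\le L_1|z(t-D)|$, which is entry $\eta_4$ with $\Psi_{22},\Psi_{44}$. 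Then substitute $\dot z$ into the $RD^2|\dot z|^2$ term via the Schur complement, which yields $\Xi$. The disturbances $\Phi$, $Y_1/a$, $Y_2$, $\mathcal R$, $\mathcal W$ are added through S-procedure terms $\lambda_2(\cdot)$ through $\gamma(\cdot)$ using the bounds \eqref{2tvr_boundFR}--\eqref{2tvr_boundphi}. For example, $\lambda_2\bigl[\frac{q_2^2k^2\Delta_\mathcal F^2}{4}\varepsilon^*-\frac{|\Phi|^2}{\varepsilon^*}\bigr]\ge0$. The scaling by $\varepsilon^*$ and $\varepsilon^*+2\bar d^*$ makes the lower-right blocks of $\Psi$ match, and monotonicity in $\varepsilon\le\varepsilon^*$, $\bar d\le\bar d^*$ keeps these inequalities valid. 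With $\eta=\mathrm{col}(z,z(t-D),\eta_3,\dots,\mathcal W)$ we obtain $\dot V+2\delta V\le\eta^{\rm T}\Omega_0\eta+\varpi\le\varpi$.

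\textbf{Step 3 (conclusion).} The comparison lemma gives $V(t)\le V(\varepsilon+\bar h)e^{-2\delta(t-\varepsilon-\bar h)}+(1-e^{-2\delta(\cdot)})\frac{\varpi}{2\delta}$. Step 1 yields $V(\varepsilon+\bar h)\le(P+QD)\varkappa^2+RD^3\Delta_z^2$, and $P\ge1$ gives $|z|^2\le V$. Adding $k|G|$ gives the second line of \eqref{2tvr_theorem1boundtheta} and attraction to $\Theta$. The bound $<\sigma$ follows from $\Omega_1,\Omega_2$. To close the circularity we argue by contradiction: let $t^*$ be the first time $|\tilde\theta|=\sigma$. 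All the estimates above hold on $[0,t^*)$, so the strict bound at $t^*$ is a contradiction.

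\textbf{Feasibility.} As $\varepsilon^*,\bar d^*,\nu^*\to0$, choose $\lambda_i$ large so that the disturbance diagonal blocks dominate, with $k$ and $a$ small. Then $\Omega_0$ reduces to the delay-free-limit LMI, which is feasible for small $kD$ because $\mu>0$. Moreover $\varpi\to0$ when $a=(\varepsilon^*+\bar d^*)^{1/3}$, as discussed after \eqref{1tvr_DefineY}.

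The main obstacle is the bookkeeping in Step 2: matching each cross term and its Schur complement to the exact entries of $\Psi$ and $\Xi$. A second delicate point is verifying the vector averaging identities \eqref{2averaging} for $\varepsilon\ge2^{n+1}\bar d$, which are needed for \eqref{2tvr_integralF_2} to hold exactly rather than approximately.
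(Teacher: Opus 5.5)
Your proposal is correct and follows the paper's proof essentially step for step: the same variable $z=\tilde\theta-kG$, the same functional $V_P+V_Q+V_R$ (your double-integral $V_R$ equals the paper's $RD\int_{t-D}^t e^{2\delta(s-t)}(s-t+D)|\dot z(s)|^2\,ds$), the same Assumption~3/Lipschitz/Jensen/S-procedure/Schur bookkeeping, the comparison lemma with $V(\varepsilon+\bar h)<(P+QD)\varkappa^2+RD^3\Delta_z^2$, and the same first-exit contradiction argument. The differences are cosmetic: the paper builds $\eta$ from norms and uses the unnormalized $\bigl|\int_{t-D}^t\dot z(s)\,ds\bigr|$, whereas your $\frac1D$ scaling gives the congruent entries $kL_1PD$ and $-RD^2e^{-2\delta D}$ instead of $\Psi_{13},\Psi_{33}$; the paper also explicitly extends the initial condition $|\tilde\theta|\le\sigma_0$ to $[-\bar h,0]$, so that $z$ and $\dot z$ are defined on $[\varepsilon+\bar h-D,\varepsilon+\bar h]$, which your Step~1 tacitly uses.
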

\textit{Proof}: The proof is given in Appendix.

\section{Examples}
\subsection{Scalar Case}
\begin{table}[t,width=.8\linewidth]\rmfamily
	\centering
	\caption{Tuning parameters: $k=0.002, \delta=0.0003, a=0.04, \sigma_0=1.6, \sigma=1.7, \mu(\sigma)=0.3$.}
	\label{tab:scalar}
	\renewcommand{\arraystretch}{1.1}
	\begin{tabularx}{0.8\linewidth}{XXXXXXX}
		\hline
		& $\nu^*$& $\bar{d}$ &$D$  & $\varepsilon$ & UB-1 & UB-2\\ \hline
		\multirow{2}{*}{\cite{gaofeng2023}} & \multirow{2}{*}{-}  & \multirow{2}{*}{-} &\multirow{2}{*}{-} & 0.0005 & 0.576& 0.652  \\ \cline{5-7}
		& &  &  & 0.001 & 0.687& 0.828  \\ \cline{1-7}
		\multirow{2}{*}{\cite{jianzhongcdc2024}}  
		& \multirow{2}{*}{-} & \multirow{2}{*}{-} &\multirow{2}{*}{0.5}  & 0.0005& 0.609 &0.751 \\ \cline{5-7}
		& &  &   & 0.001& 0.728 &0.956 \\ \cline{1-7}
		\multirow{4}{*}{Theorem 1} & \multirow{2}{*}{-} & \multirow{2}{*}{0.0001} &\multirow{2}{*}{1.0}  & 0.0005& 0.695 &0.999 \\ \cline{5-7}
		& &  &   & 0.001& 0.822 &1.246 \\ \cline{2-7}
		& \multirow{2}{*}{0.0002} & \multirow{2}{*}{0.0001} &\multirow{2}{*}{1.0}  & 0.0005& 0.732 &1.046 \\ \cline{5-7}
		& &  &   & 0.001& 0.859 &1.292 \\
		\hline
	\end{tabularx}
\end{table}
\begin{table}[t,width=.8\linewidth]\rmfamily
	\caption{Tuning parameters: $\varepsilon=0.0005, \bar{d}=0.0001, \nu^*=0.0002, a=0.04, \sigma_0=1.6, \sigma=1.7, \mu(\sigma)=0.3$.}
	\label{tab:scalar_2}
	\renewcommand{\arraystretch}{1.1}
	\begin{tabularx}{0.8\linewidth}{XXXXX}
		\hline
		& $k$ &$\delta$  & $D$ & UB-1 \\ \hline
		\multirow{3}{*}{Theorem 1}   & \multirow{1}{*}{0.001} &\multirow{1}{*}{0.00015} & 6.14 & 0.900  \\ \cline{2-5}
		& \multirow{1}{*}{0.002} &\multirow{1}{*}{0.0003}  & 2.42& 0.900  \\ \cline{2-5}
		& \multirow{1}{*}{0.004} &\multirow{1}{*}{0.0006}  &0.46  &0.900      \\ 
		\hline
	\end{tabularx}
\end{table}

Consider the nonlinear scalar map
\begin{equation}
	\begin{array}{l}
		q(\theta)=\frac{1}{3}\theta^3-\theta,
	\end{array}
	\nonumber
\end{equation}
and we employ the ES controller (\ref{1tvr_controller}) to locate the local maximum {$y^*=q(\theta^*)$} at $\theta^*=-1$, under transmission delays in the closed-loop system. Let us present two sets of data: For a given $\sigma$, the ``\textit{Real Bounds}'' refer to the exact bounds calculated by (\ref{1tvr_AssumptionBound2}) as if we know the map exactly, whereas the ``\textit{Estimated Bounds}'' refer to the approximate bounds which are somewhat larger than the real bounds. \textit{Real Bounds}: $q_0=3.861, q_1=6.290, q_2=5.400, L_1=3.730,  L_2=2.000$. \textit{Estimated Bounds}: $q_0=5.2297, q_1=7.410, q_2=5.800, L_1=5.880, L_2=2.050$. The solutions for both \textit{real bounds} and \textit{estimated bounds} via Theorem 1 are listed in Table \ref{tab:scalar}, where ``UB-1" and ``UB-2" denote the ultimate bound $\lim\limits_{t \to \infty}\sup\big|{\tilde{\theta}}(t)\big|$ obtained for the two data sets, respectively. 
The numerical simulations are revealed in {Fig. \ref{scalar_output_trajectory}, where the parameters were set as $a=0.04, \varepsilon=0.005, \bar{d}=1.25\times 10^{-3}, \nu^*=0.0002$.
	The results of Fig. \ref{scalar_output_trajectory} indicate that the system becomes increasingly unstable with larger $D$, and the controller gain can be reduced to partially compensate for the effects of significant time delays. This observation is further supported by the results presented in Table \ref{tab:scalar_2}.}

\begin{figure}[pos = H,abovecap=0pt]
	\centering
		\includegraphics[scale=0.5]{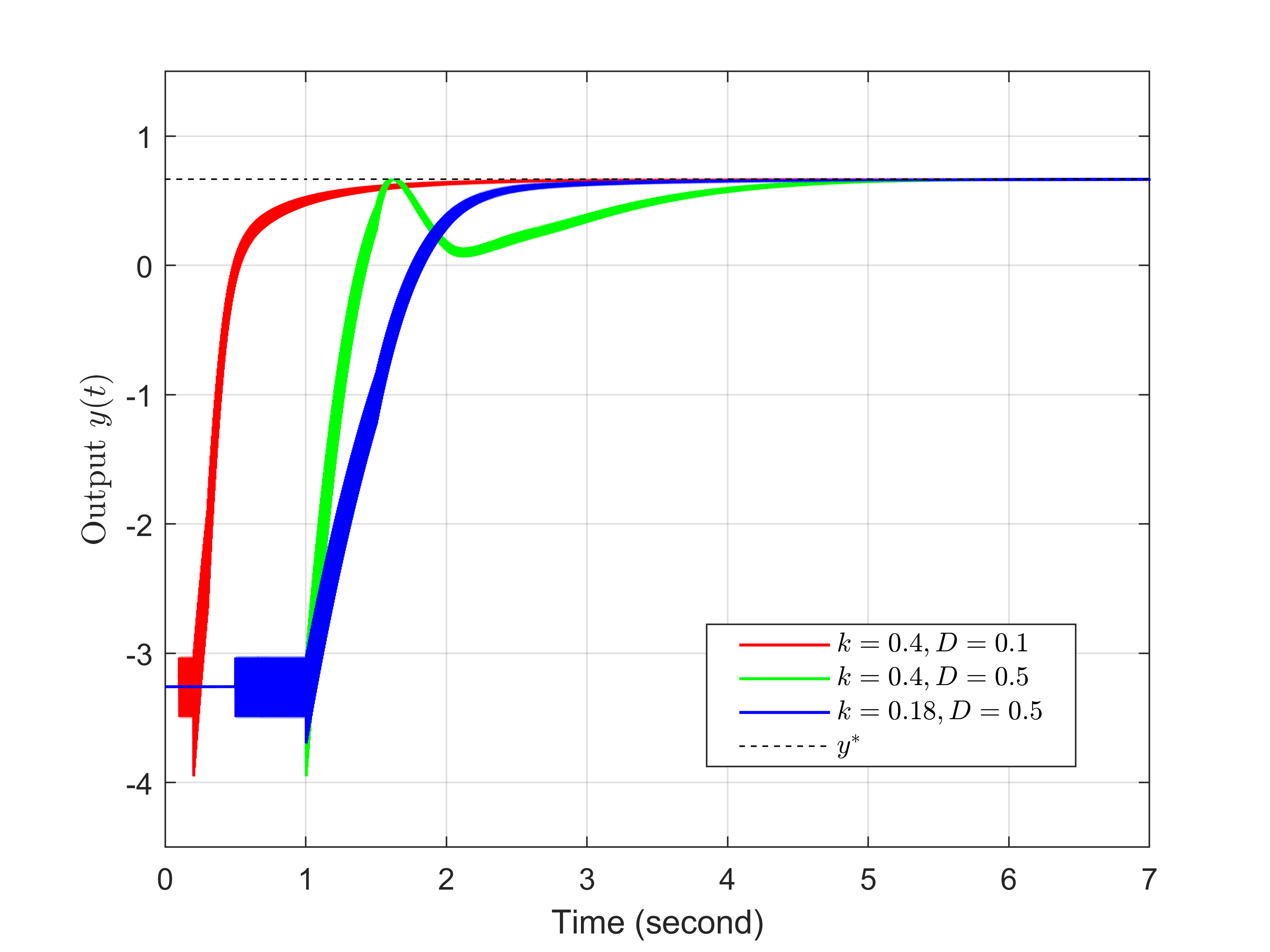}
		\caption{The trajectory of the output (scalar case).}
		\label{scalar_output_trajectory}
\end{figure}

\subsection{Vector Case}
\begin{table}[h,width=.8\linewidth]\rmfamily
	\caption{Tuning parameters: $k=0.004, \delta=0.002, a=0.06, \sigma_0=0.9, \sigma=1.0, \mu(\sigma)=1.0$.}
	\label{tab:vector}
	\renewcommand{\arraystretch}{1.1}
	\begin{tabularx}{0.8\linewidth}{XXXXXXX}
		\hline
		& $\nu^*$& $\bar{d}$ &$D$  & $\varepsilon$ & UB-1 & UB-2\\ \hline
		\multirow{2}{*}{\cite{gaofeng2023}}  & \multirow{2}{*}{-} & \multirow{2}{*}{-} &\multirow{2}{*}{-} & 0.01 & 0.264& 0.326  \\ \cline{5-7}
		&  & &  & 0.02 & 0.385& 0.468  \\ \cline{1-7}
		\multirow{2}{*}{\cite{jianzhongcdc2024}}   
		& \multirow{2}{*}{-} & \multirow{2}{*}{-} &\multirow{2}{*}{4.0}  & 0.01& 0.284 &0.374 \\ \cline{5-7}
		& &  &   & 0.02& 0.417 &0.541 \\ \cline{1-7}
		\multirow{4}{*}{Theorem 2}& \multirow{2}{*}{-} & \multirow{2}{*}{0.001} &\multirow{2}{*}{8.0}  & 0.01& 0.329 &0.468 \\ \cline{5-7}
		& &  &   & 0.02& 0.470 &0.659 \\ \cline{2-7}
		 & \multirow{2}{*}{0.001} & \multirow{2}{*}{0.001} &\multirow{2}{*}{8.0}  & 0.01& 0.382 &0.529 \\ \cline{5-7}
		& &  &   & 0.02& 0.523 &0.721 \\
		\hline
	\end{tabularx}
\end{table}
\begin{table}[h,width=.8\linewidth]\rmfamily
	 \centering
	\caption{Tuning parameters: $\varepsilon=0.01, \bar{d}=0.001, \nu^*=0.001, a=0.06, \sigma_0=0.9, \sigma=1.0, \mu(\sigma)=1.0$.}
	\label{tab:vector_2}
	\renewcommand{\arraystretch}{1.1}
	\begin{tabularx}{0.8\linewidth}{XXXXX}
		\hline
		& $k$ &$\delta$  & $D$ & UB-1 \\ \hline
		\multirow{3}{*}{Theorem 2}   & \multirow{1}{*}{0.002} &\multirow{1}{*}{0.001} & 53.2 & 0.420  \\ \cline{2-5}
		& \multirow{1}{*}{0.004} &\multirow{1}{*}{0.002}  & 14.0& 0.420  \\ \cline{2-5}
		& \multirow{1}{*}{0.006} &\multirow{1}{*}{0.003}  &0.52  &0.420      \\ 
		\hline
	\end{tabularx}
\end{table}
We consider a nonlinear function with two input variables
\begin{equation}
	\begin{array}{l}
		q(\theta_1, \theta_2)=-\frac{1}{2}(\theta_1^2+\theta_2^2)-\frac{1}{12}\theta_1^2\theta_2^2-\frac{1}{24}(\theta_1^4+\theta_2^4),
	\end{array}
	\nonumber
\end{equation}
which has a maximum at $(0, 0)$. By applying the ES regulator (\ref{2tvr_controller}), the extremum can be sought even in the presence of time delay. Here, two groups of data are also reported. \textit{Real Bounds}: $q_0=1.1668, q_1=1.8857, q_2=2.0, L_1=1.4033, L_2=0.7778$. \textit{Estimated Bounds}: $q_0=1.3, q_1=2.0, q_2=2.1, L_1=2.2, L_2=1.0$. Tables \ref{tab:vector}-\ref{tab:vector_2} show the LMI results attained by Theorem 2.
For the numerical solution, considering $K=[0.16,0;0,0.16]$, $a=0.02$, $\varepsilon=0.01$, $D=1.5$, $\bar {d}=0.0012, \nu^*=0.001$, $\hat{\theta}_1(0)=1$, $\hat{\theta}_2(0)=-1$, the trajectory is shown in Figs. \ref{simulation_dither}-\ref{trajectory_vector_3D}. Here the input signals $\theta_1(t),\theta_2(t)$ are measured at the exit of the controller.
\begin{figure}[pos = H,abovecap=0pt]
	\centering
		\includegraphics[scale=0.5]{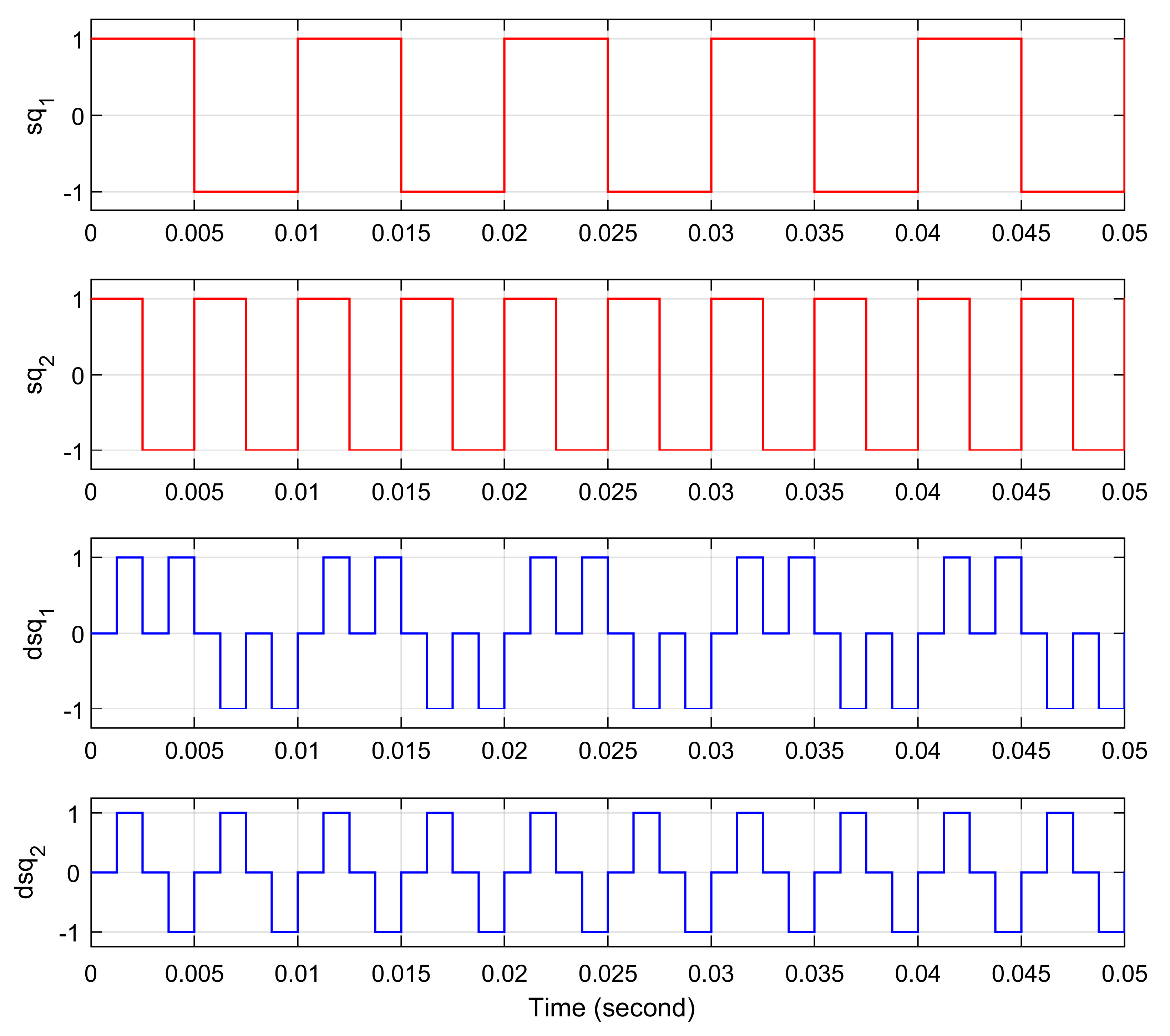}
		\caption{The square waves of the dithers.}
		\label{simulation_dither}
\end{figure}
\vspace{-1.5\baselineskip}
\begin{figure}[pos = H,abovecap=0pt]
	\centering
		\includegraphics[scale=0.55]{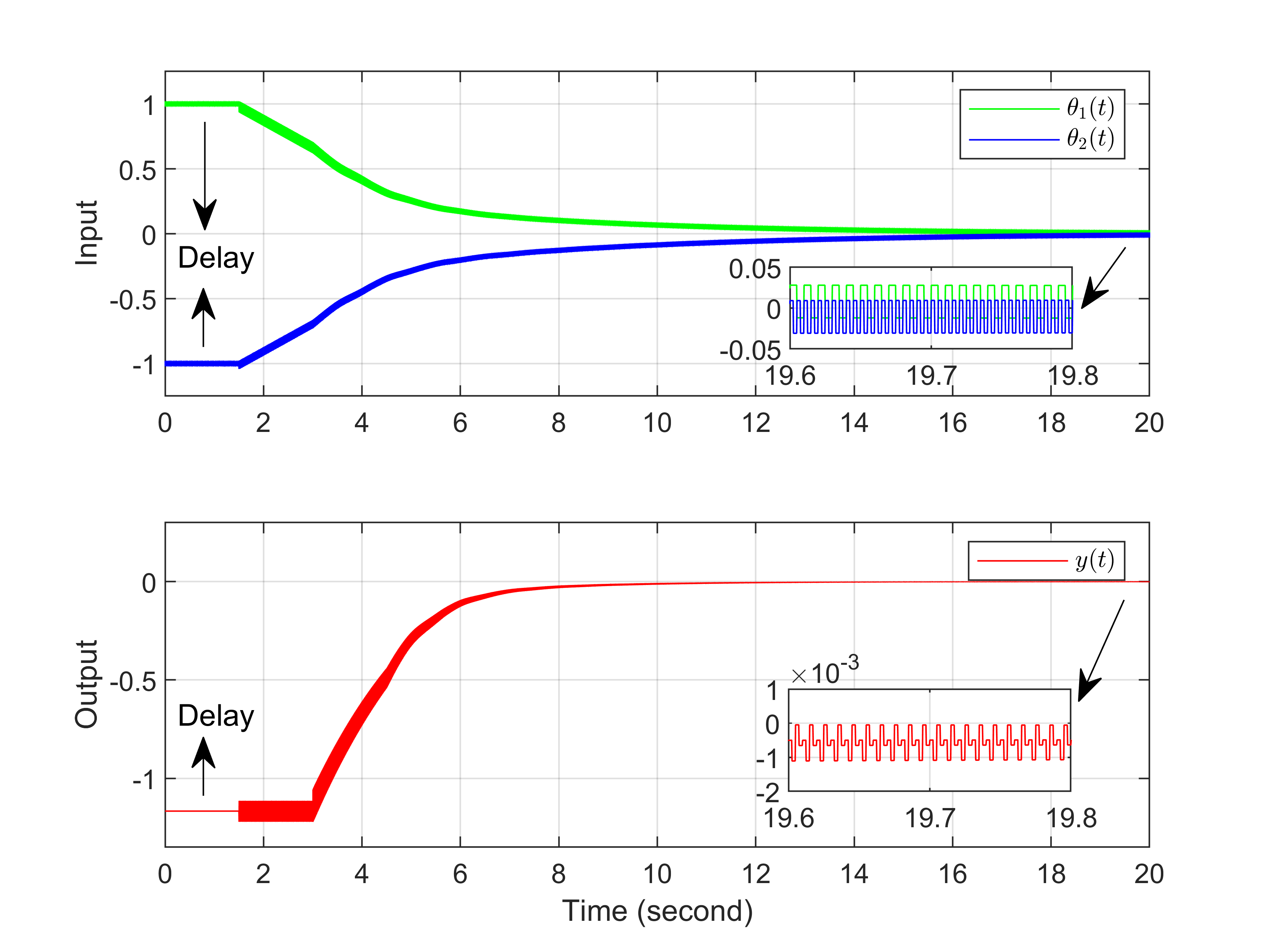}
		\caption{The trajectory of the input and output (vector case).}
		\label{trajectory_vector_2D}
\end{figure}
\vspace{-1.5\baselineskip}
\begin{figure}[pos=H,abovecap=0pt]
	\centering
		\includegraphics[scale=0.52]{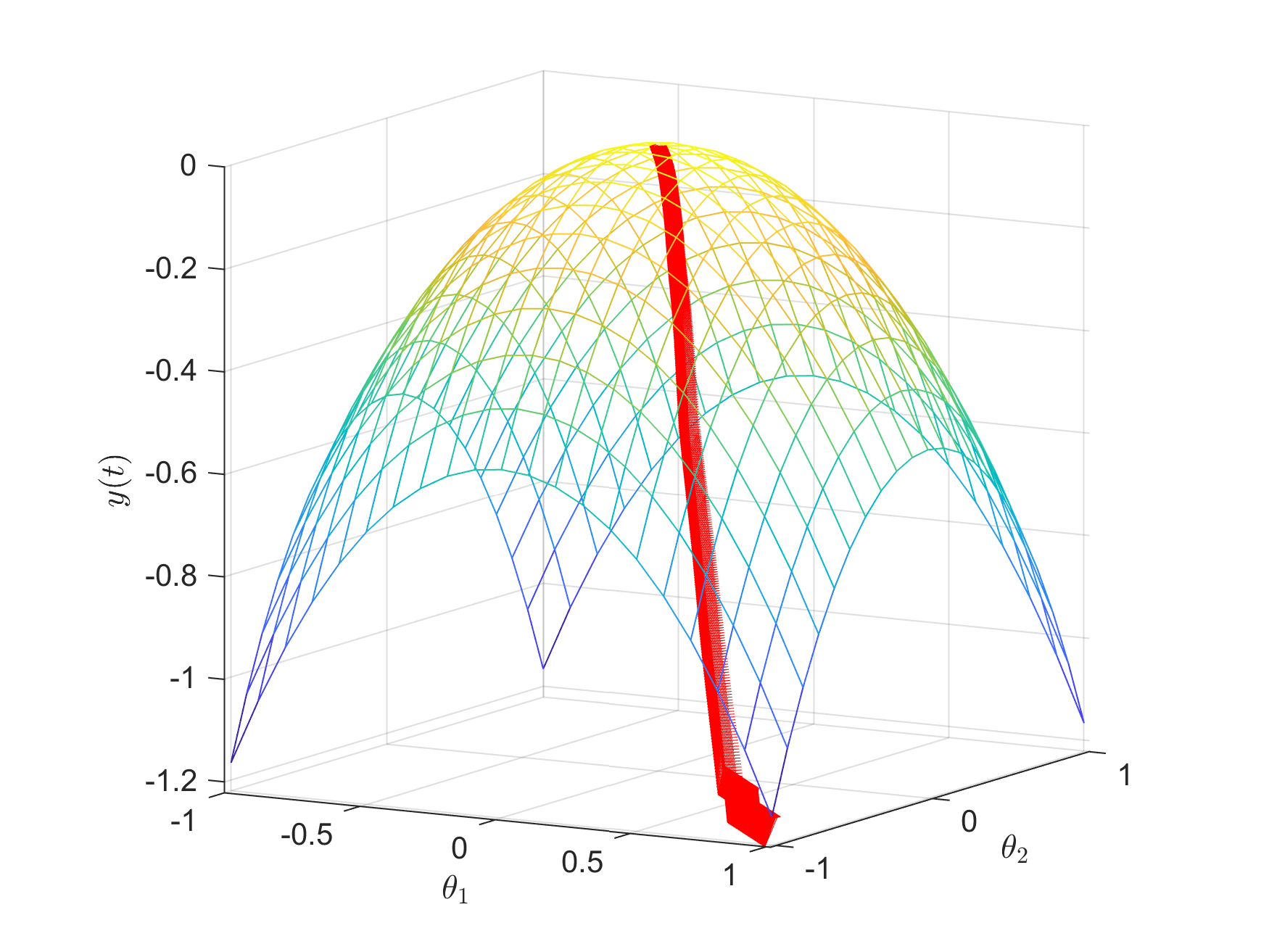}
		\caption{The three-dimension trajectory of ES.}
		\label{trajectory_vector_3D}
\end{figure}

\section{Conclusion}
In this paper, we develop a novel framework to manage with delay-robustness analysis for ES
of non-quadratic static maps {subject to a small disturbance}. The dither signals are carefully selected to handle the time-varying delay uncertainty. With the proposed Lyapunov functional and the resulting LMIs, we have offered quantitative upper bounds on the dither period and the delay that the ES system is able to tolerate. {Practical stability can be achieved by appropriately selecting ES parameters for any large known part of the constant delay.}

\appendix
\section*{Appendix. Proof of Theorem 2}
\setcounter{equation}{0}
\renewcommand\theequation{a.\arabic{equation}}

First of all, we prove the practical stability of \eqref{2tvr_errorSystemZYR}. By employing
\begin{equation}\label{2tvr_defineVp}
	V_P(t)=Pz^{\rm T}(t)z(t),
\end{equation}
and taking the time derivative of $V_P(t)$ along the trajectory of system \eqref{2tvr_errorSystemZYR}, we have
\begin{equation}\label{2tvr_dotVp}
	\begin{array}{l}
		\dot V_P(t)+2\delta V_P(t)=2Pz^{\rm T}(t)\dot z(t)+2\delta P z^{\rm T}(t)z(t)\\
		=2P z^{\rm T}(t)\Big[k\frac{\partial q}{\partial \theta}^{\rm T}\big(\theta^{*}+z(t-D)\big)+k\Phi(t)
		 -\frac{2k}{a}Y_1(t)-2kY_2(t)+ka\mathcal{R}(t)
		 {+k\mathcal{W}(t)}\Big]+2\delta P z^{\rm T}(t)z(t)\\
		=2P z^{\rm T}(t)\left[k\Phi(t)
		-\frac{2k}{a}Y_1(t)-2kY_2(t)+ka\mathcal{R}(t){+k\mathcal{W}(t)}\right]+2\delta P z^{\rm T}(t)z(t)\\
		\ \ 
		+2kPz^{\rm T}(t)\left[\frac{\partial q}{\partial \theta}^{\rm T}\big(\theta^{*}+z(t)\big)+\left(\frac{\partial q}{\partial \theta}^{\rm T}\big(\theta^{*}+z(t-D)\big)-\frac{\partial q}{\partial \theta}^{\rm T}\big(\theta^{*}+z(t)\big)\right)\right]\\
		\leq 2P z^{\rm T}(t)\left[k\Phi(t)
		-\frac{2k}{a}Y_1(t)-2kY_2(t)+ka\mathcal{R}(t){+k\mathcal{W}(t)}\right] +2\delta P z^{\rm T}(t)z(t)
		+2kPz^{\rm T}(t)\frac{\partial q}{\partial \theta}^{\rm T}\big(\theta^{*}+z(t)\big)\\
		\ \ \ +2kP\left|z(t)\right|\cdot L_1\cdot\left|z(t-D)-z(t)\right|\\
		\leq2 P|z(t)|\cdot\left[k|\Phi(t)|
		+\frac{2k}{a}|Y_1(t)|+2k|Y_2(t)|+ka|\mathcal{R}(t)|{+k|\mathcal{W}(t)|}\right]+2\delta P z^{\rm T}(t)z(t)-2kP\mu(\sigma)z^{\rm T}(t)z(t)\\
		\ \ \ +2kL_1P\left|z(t)\right| \cdot\left|\int_{t-D}^t \dot z(s){\rm{d}}s\right|,
	\end{array}
\end{equation}
where we have utilized the fact that $2kPz^{\rm T}(t)\frac{\partial q}{\partial \theta}^{\rm T}\big(\theta^{*}+z(t)\big)\leq -2kP\mu(\sigma)z^{\rm T}(t)z(t)$ from (\ref{2tvr_boundsAssumption}) in Assumption 3. Moreover, by taking
\begin{equation}\label{2tvr_defineVq1}
	\begin{array}{l}
		V_{Q}(t)=Q\int_{t-D}^t e^{2\delta(s-t)}z^{\rm T}(s)z(s){\rm{d}}s,
	\end{array}
\end{equation}
and the time derivative of \eqref{2tvr_defineVq1}, it can be derived that
\begin{equation}\label{2tvr_dotVq1}
	\begin{array}{l}
		\dot V_{Q}(t)+2\delta V_{Q}(t)=Qz^{\rm T}(t)z(t)-Qe^{-2\delta D}z^{\rm T}(t-D)z(t-D).
	\end{array}
\end{equation}
Besides, using
\begin{equation}\label{2tvr_defineVr}
	\begin{array}{l}
		V_R(t)=RD\int_{t-D}^t e^{2\delta(s-t)}(s-t+D)\dot z^{\rm T}(s)\dot z(s){\rm{d}}s,
	\end{array}
\end{equation}
and differentiating \eqref{2tvr_defineVr} with respect to $t$ leads to
\begin{equation}
	\begin{array}{l}\label{2tvr_dotVr_1}
		\dot V_R(t)+2\delta V_R(t)\\=RD^2\dot z^{\rm T}(t)\dot z(t)-RD\int_{t-D}^t e^{2\delta(s-t)}\dot z^{\rm T}(s)\dot z(s){\rm{d}}s\\
		\leq RD^2\dot z^{\rm T}(t)\dot z(t)-RDe^{-2\delta D}\int_{t-D}^t \dot z^{\rm T}(s)\dot z(s){\rm{d}}s\\
		\leq RD^2\dot z^{\rm T}(t)\dot z(t)-Re^{-2\delta D}\int_{t-D}^t \dot z^{\rm T}(s){\rm{d}}s\cdot \int_{t-D}^t \dot z(s){\rm{d}}s,
	\end{array}
\end{equation}
in which Jensen's inequality is employed. Under Assumptions 3--4, from $\frac{\partial q}{\partial \theta}\left(\theta^{*}\right)=0$ and $\big|\frac{\partial q}{\partial \theta}\left(\theta^{*}+z(t-D)\right)-\frac{\partial q}{\partial \theta}\left(\theta^{*}\right)\big|< L_1|z(t-D)|$, it follows that
\begin{equation}\label{2tvr_qq}
	\begin{array}{l}
		0< L_1^2|z(t-D)|^2-\big|\frac{\partial q}{\partial \theta}\big(\theta^{*}+z(t-D)\big)\big|^2 .
	\end{array}
\end{equation}

{Now based on \eqref{2tvr_defineVp}, \eqref{2tvr_defineVq1}, and \eqref{2tvr_defineVr}, we construct the following Lyapunov functional candidate:}
\begin{equation}\label{2tvr_defineV}
	\begin{array}{l}
		V(t)=V_P(t)+V_{Q}(t)+V_R(t),
	\end{array}
\end{equation}
{in which scalars $P\geq 1, Q>0, R>0$, are decision variables.}
According to (\ref{2tvr_dotVp}), (\ref{2tvr_dotVq1}), (\ref{2tvr_dotVr_1}), and (\ref{2tvr_qq}), and by employing the S-procedure with positive parameters $\lambda_i (i=1,2,3,4,5)$, {and $\gamma$}, for $t\geq \varepsilon+\bar{h}$, we obtain
\begin{equation}\label{2tvr_dV}
	\begin{array}{l}
		\dot V(t)+2\delta V(t)
		+\lambda _1\left[L_1^2|z(t-D)|^2-\big|\frac{\partial q}{\partial \theta}\big(\theta^{*}+z(t-D)\big)\big|^2 \right]\\
		-\frac{\lambda_2}{\varepsilon}\Phi^{\rm T}(t)\Phi(t)-\frac{\lambda_3}{\varepsilon+2\bar{d}}Y_1^{\rm T}(t)Y_1(t)-\frac{\lambda_4}{\varepsilon+2\bar{d}}Y_2^{\rm T}(t)Y_2(t)-\lambda_5a\mathcal{R}^{\rm T}(t)\mathcal{R}(t) {-\gamma\mathcal{W}^{\rm T}(t)\mathcal{W}(t)}\\
		\leq \eta ^{\rm T}(t)\Psi\eta(t)+RD^2\dot z^{\rm T}(t)\dot z(t)\\
		\leq \eta ^{\rm T}(t)\Psi\eta(t)+\eta ^{\rm T}(t)\Xi R^{-1}\Xi^{\rm T}\eta(t)<0,
	\end{array}
\end{equation}
where $\eta(t)=col\{|z(t)|,|z(t-D)|,\big|\int_{t-D}^t \dot z(s){\rm{d}}s\big|, \big|\frac{\partial q}{\partial \theta}\left(\theta^{*}+z(t-D)\right)\big|,|\Phi(t)|,  |Y_1(t)|,|Y_2(t)|, |\mathcal{R}(t)|{, |\mathcal{W}(t)|}\}$,
$\Psi$ and $\Xi$ are given in (\ref{2tvr_Psi}) and (\ref{2tvr_Xi}), respectively. The last inequality of (\ref{2tvr_dV}) follows from $\Omega_0<0$ given in (\ref{2tvr_lmi1}) in the sense of Schur complement.

Considering (\ref{2tvr_boundFR})--\eqref{2tvr_boundphi}, the inequality \eqref{2tvr_dV} suggests
\begin{equation}\label{2tvr_A8}
	\begin{array}{l}
		\dot V(t)+2\delta V(t)<
		\frac{\lambda_2}{\varepsilon}\Phi^{\rm T}(t)\Phi(t)+\frac{\lambda_3}{\varepsilon+2\bar{d}}Y_1^{\rm T}(t)Y_1(t)+\frac{\lambda_4}{\varepsilon+2\bar{d}}Y_2^{\rm T}(t)Y_2(t) +\lambda_5a\mathcal{R}^{\rm T}(t)\mathcal{R}(t)
		{+\gamma\mathcal{W}^{\rm T}(t)\mathcal{W}(t)}\\
		< \frac{\lambda_2q_2^2(\sigma)k^2 \Delta_\mathcal{F}^2}{4}\varepsilon+\frac{\lambda_3\overline{M}_0^2 q_1^2(\sigma)\Delta_\theta^2}{4}(\varepsilon+2\bar{d})
		 +\frac{\lambda_4 \overline{M}_0^2\overline{S}_0^2q_2^2(\sigma)\Delta_\theta^2}{4}(\varepsilon+2\bar{d})+\lambda_5a \Delta_\mathcal{R}^2 {+\frac{4\gamma\overline{M}_0^2}{a^2}{\nu^*}^2}\triangleq\varpi.
	\end{array}
\end{equation}
By resorting to the comparison principle for (\ref{2tvr_A8}), for $t\geq \varepsilon+\bar{h}$, we have 
\begin{equation}\label{2tvr_V<M}
	\begin{array}{l}
		V(t)\leq V(\varepsilon+\bar{h})e^{-2\delta(t-\varepsilon-\bar{h})} +\big(1-e^{-2\delta(t-\varepsilon-\bar{h})}\big)\frac{\varpi}{2\delta}\triangleq N(t).
	\end{array}
\end{equation}
On the other hand, based on \eqref{2tvr_defineVp} and \eqref{2tvr_defineV}, under the condition $P\geq 1$, it follows that
\begin{equation}\label{2tvr_boundV}
	\begin{array}{l}
		V(t)\geq Pz^{\rm T}(t)z(t) \geq \left|z(t)\right|^2.
	\end{array}
\end{equation}

Next, we show the practical stability of \eqref{2tvr_errorSystemFR} and the supposed overall bound \eqref{2tvr_overallBound} is not violated. Under the initial condition underneath \eqref{2tvr_define_esterror}, i.e., $\left|\tilde{\theta}(t)\right|\le\sigma_0<\sigma$ for all $t\in\left[0,\bar{h}\right]$, considering (\ref{2tvr_boundFR}), we get
\begin{equation}\label{2tvr_theta1}
	\begin{array}{l}
		\left|{\tilde{\theta}}(t)\right|=\left|{\tilde{\theta}}\left(\bar{h}\right)+\int_{\bar{h}}^t\dot{\tilde{\theta}}(\tau){\rm{d}}\tau \right|< \left|{\tilde{\theta}}\left(\bar{h}\right)\right|+\Delta_\theta \left(t-\bar{h}\right)
		\le\sigma_0+\Delta_\theta\varepsilon, \ \
		t\in \left[\bar{h}, \varepsilon+\bar{h}\right].
	\end{array}
\end{equation}
Equation (\ref{2tvr_theta1}) corresponds to the 1st formula in (\ref{2tvr_theorem1boundtheta}) and is ensured by $\Omega_2<\big(\sigma-\frac{k\Delta_\mathcal{F}}{2}\varepsilon\big)^2$ in (\ref{2tvr_lmi1}). Furthermore, from (\ref{2tvr_denoteZ}) and \eqref{2tvr_boundV}, we conclude that
\begin{equation}\label{2tvr_theta<U}
	\begin{array}{l}
		\big|{\tilde{\theta}}(t)\big|\leq \left|z(t)\right|+\left|kG(t)\right|\leq\sqrt{V(t)}+\left|kG(t)\right|< \sqrt{N(t)}+\frac{k \Delta_\mathcal{F}}{2}\varepsilon, \ \ t\geq \varepsilon+\bar{h},
	\end{array}
\end{equation}
which corresponds to the 2nd formula in (\ref{2tvr_theorem1boundtheta}). Notice that $N(t)$ defined in (\ref{2tvr_V<M}) is monotonically varying with respect to $t$. Therefore, from (\ref{2tvr_theta<U}), to guarantee that
\begin{equation*}
	\begin{array}{l}
		\big|{\tilde{\theta}}(t)\big|<\sqrt{N(t)}+\frac{k \Delta_\mathcal{F}}{2}\varepsilon<\sigma, \ \ t \in [\varepsilon+\bar{h}, \infty),
	\end{array}
\end{equation*}
the inequality must be satisfied at the two boundary points, namely,
\begin{equation}\label{2tvr_theta38}
	\Bigg\{\begin{array}{ll}\big|{\tilde{\theta}}(\varepsilon+\bar{h})\big|<\sqrt{V(\varepsilon+\bar{h})}+\frac{k \Delta_\mathcal{F}}{2}\varepsilon<\sigma,\\
		\lim\limits_{t \to \infty}\big|{\tilde{\theta}}(t)\big|<\lim\limits_{t \to \infty}\sqrt{N(t)}+\frac{k \Delta_\mathcal{F}}{2}\varepsilon<\sigma.
	\end{array}
\end{equation}
It can be derived from (\ref{2tvr_denoteZ}) and \eqref{2tvr_theta1} that
\begin{equation*}
	\begin{array}{l}
		\left|z(t)\right|\leq \big|{\tilde{\theta}}(t)\big|+\left|kG(t)\right|<
		\sigma_0+\Delta_\theta (t-\bar{h})+\frac{k \Delta_\mathcal{F}}{2}\varepsilon,
	\end{array}
\end{equation*}
and then, we further attain
\begin{equation*}
	\begin{array}{l}
		\left|z(\varepsilon+\bar{h})\right|<
		\sigma_0+\Delta_\theta \varepsilon+\frac{k \Delta_\mathcal{F}}{2}\varepsilon\triangleq\varkappa.
	\end{array}
\end{equation*}
Thus, based on (\ref{2tvr_defineVp}), we have
\begin{equation}\label{2tvr_boundVp}
	\begin{array}{l}
		V_P(\varepsilon+\bar{h})=Pz^{\rm T}(\varepsilon+\bar{h})z(\varepsilon+\bar{h})=P\left|z(\varepsilon+\bar{h})\right|^2
		<P\varkappa^2.
	\end{array}
\end{equation}
It follows from (\ref{2tvr_defineVq1}) that
\begin{equation}\label{2tvr_boundVq}
	\begin{array}{l}
		V_{Q}(\varepsilon+\bar{h})=Q\int_{\varepsilon+\bar{h}-D}^{\varepsilon+\bar{h}} e^{2\delta(s-\varepsilon-\bar{h})}z^{\rm T}(s)z(s){\rm{d}}s
		\leq Q\int_{\varepsilon+\bar{h}-D}^{\varepsilon+\bar{h}} \left|z(s)\right|^2{\rm{d}}s
		<QD\varkappa^2.
	\end{array}
\end{equation}
From (\ref{2tvr_defineVr}) and \eqref{2tvr_boundsGY}, we get
\begin{equation}\label{2tvr_boundVr}
	\begin{array}{l}
		V_R(\varepsilon+\bar{h})=	RD\int_{\varepsilon+\bar{h}-D}^{\varepsilon+\bar{h}} e^{2\delta(s-\varepsilon-\bar{h})}(s-\varepsilon-\bar{h}+D)\dot z^{\rm T}(s)\dot z(s){\rm{d}}s\\
		<RD^2\int_{\varepsilon+\bar{h}-D}^{\varepsilon+\bar{h}} \dot z^{\rm T}(s)\dot z(s){\rm{d}}s
		=RD^2\int_{\varepsilon+\bar{h}-D}^{\varepsilon+\bar{h}} \left|\dot z(s)\right|^2{\rm{d}}s
		<RD^3\Delta_z^2.
	\end{array}
\end{equation}
In order to let $z(t)$ in (\ref{2tvr_boundVq}) and $\dot{z}(t)$ in (\ref{2tvr_boundVr}) be well-defined at $t\in[\varepsilon+\bar{h}-D,\varepsilon+\bar{h}]$, we supplement the initial condition such that $\big|\hat{\theta}(t)-\theta^*\big|\le\sigma_0<\sigma$ for $t\in\left[-\bar{h},0\right]$, so that $\big|\tilde{\theta}(t)\big|\le\sigma_0<\sigma$ for all $t\in[-\bar{h},0]$. Consequently, considering (\ref{2tvr_defineV}) and (\ref{2tvr_boundVp})--(\ref{2tvr_boundVr}), we arrive at
\begin{equation}\label{2tvr_V3}
	\begin{array}{l}
		V(\varepsilon+\bar{h})<(P+QD)\varkappa^2+RD^3\Delta_z^2.
	\end{array}
\end{equation}
Substituting (\ref{2tvr_V3}) into (\ref{2tvr_theta38}), the 1st formula of (\ref{2tvr_theta38}) is ensured  by $\Omega_2<\big(\sigma-\frac{k\Delta_\mathcal{F}}{2}\varepsilon\big)^2$ in (\ref{2tvr_lmi1}), whereas the 2nd formula of (\ref{2tvr_theta38}) is guaranteed by $\Omega_1<\big(\sigma-\frac{k\Delta_\mathcal{F}}{2}\varepsilon\big)^2$ in (\ref{2tvr_lmi1}).
Therefore, through a contradiction argument similar to Appendix of \cite{ZHU2022109965,gaofeng2023}, it is proved that (\ref{2tvr_lmi1}) implies (\ref{2tvr_overallBound}).

{In the sequel, we verify the feasibility of \eqref{2tvr_lmi1} for given $\sigma$ and $D$, and small enough $\bar{d}^*,\varepsilon^*, {\nu^*}, a$, and $k$. First, choosing {large enough $\gamma>0$,}  $\lambda_i=1, i=2,3,4,5, a={(\varepsilon^*+2\bar{d}^*)}^{\frac{1}{3}}, \bar{d}^*\rightarrow 0^+, \varepsilon^*\rightarrow 0^+$, inequality $\Omega_0<0$ in \eqref{2tvr_lmi1} is feasible if the following inequality holds
	\begin{equation}\label{omega2}
			\begin{array}{l}
					\begin{array}{l}
						\begin{bmatrix}
							-(2k\mu(\sigma)-2\delta)P+Q&0&kL_1P&0\\
							*&-Qe^{-2\delta D}+\lambda_1 L_1^2&0&0\\
							*&*&-Re^{-2\delta D}&0\\
							*&*&*&-\lambda_1 
						\end{bmatrix}<0.
					\end{array}
			\end{array} 
	\end{equation}
	After that, by choosing $P=1, Q<2k\mu(\sigma)-2\delta, 0<\lambda_1<\frac{Q}{L_1^2}e^{-2\delta D}$, and large enough $R>0$, LMI \eqref{omega2} is always feasible. Note that the exponential decay rate $\delta>0$ is governed by the controller gain $k$ such that $2k\mu(\sigma)-2\delta>0$. 
	In addition, choosing $P=1, \lambda_i=1, i=2,3,4,5, a={(\varepsilon^*+2\bar{d}^*)}^{\frac{1}{3}}, \bar{d}^*\rightarrow 0^+, \varepsilon^*\rightarrow 0^+ {,\nu^*\rightarrow 0}$, and sufficiently small $Q,k$, it is evident that the inequalities $\Omega_1<\big(\sigma-\frac{k\Delta_\mathcal{F}}{2}\varepsilon^*\big)^2$ and $
	\Omega_2<\big(\sigma-\frac{k\Delta_\mathcal{F}}{2}\varepsilon^*\big)^2$ in \eqref{2tvr_lmi1} are feasible.} This completes the proof of Theorem 2. 

\printcredits

\section*{Declaration of competing interest}
The authors declare that they have no known competing financial interests or personal relationships that could have appeared to influence the work reported in this paper.

\bibliographystyle{elsarticle-num}

\bibliography{tvr_delay_refer_1}




%

\end{document}